\theoremstyle{plain}
\newtheorem{thm}{Theorem}[section]
\newtheorem{lem}[thm]{Lemma}
\newtheorem{prop}[thm]{Proposition}
\theoremstyle{definition}
\theoremstyle{remark}
\numberwithin{equation}{section}
\newcommand{\keywords}{\textbf{Key words and phrases: }\medskip}
\newcommand{\subjclass}{\textbf{Math. Subj. Clas.: }\medskip}
\begin{document}
\title{\textbf{A discrete version of plane wave solutions of the Dirac equation in the Joyce form} }
\author{\textbf{Volodymyr Sushch} \\
{ Koszalin University of Technology} \\
 { Sniadeckich 2, 75-453 Koszalin, Poland} \\
 {volodymyr.sushch@tu.koszalin.pl} }

\date{}
\maketitle
\begin{abstract}
We construct a  discrete version  of the plane wave solution to a discrete Dirac-K\"{a}hler equation in the Joyce form.
A geometric discretisation scheme  based on both forward and backward difference operators is used. The conditions under which a discrete plane wave solution satisfies a discrete Joyce equation are discussed.
\end{abstract}

\keywords{geometric discretisation, Dirac-K\"{a}hler equation,  Joyce equation, difference equations, discrete models,  plane wave solution}

 \subjclass  {39A12, 39A70, 81Q05}

\section{Introduction}
Discrete models od Dirac type equations based on the Dirac-K\"{a}hler formulation have been of interest recently from both the applied and the theoretical point of view.
In the  Dirac-K\"{a}hler approach, a discretisation scheme is geometric in nature and rests upon the use of the differential forms calculus.
This means that the geometric properties and the algebraic relationships between the differential, the exterior product and the Hodge star operator are expected to be captured in the case of their discrete counterparts.
This work is a continuation of that studied in the papers \cite{S1, S2, S3, S4}. In this paper, we are mainly interested in a discrete model in which both forward and backward differences operators are used. This model relies on the use a geometric discretisation scheme proposed in \cite{S1}, and a discrete Clifford calculus framework on discrete forms described in \cite{S4}.  For a review of discrete Clifford calculus frameworks on lattices, we refer the reader to \cite{Beauce, FKS,F2, F3, Kanamori, Vaz}.  Our purpose here is to construct a discrete version of the plane wave solution to a discrete  Dirac-K\"{a}hler equation in the Joyce form.

We first briefly review some definitions and basic facts  on the Dirac-K\"{a}hler equation \cite{Kahler, Rabin} and the Dirac equation
 in the spacetime algebra \cite{H1, H2}.
Let $M={\mathbb R}^{1,3}$ be  Minkowski space.
Denote by $\Lambda^r(M)$ the vector space of smooth differential $r$-forms, $r=0,1,2,3,4$. We consider  $\Lambda^r(M)$ over $\mathbb{C}$.
Let $\omega,\ \varphi\in\Lambda^r(M)$.
  The inner product is defined by
\begin{equation}\label{1.1}
(\omega, \ \varphi)=\int_{M}\omega\wedge\ast\overline{\varphi},
\end{equation}
where $\wedge$ is the exterior product, $\overline{\varphi}$ denotes the complex conjugate of the form  $\varphi$ and $\ast$ is the Hodge star operator  $\ast:\Lambda^r(M)\rightarrow\Lambda^{4-r}(M)$ with respect to the Lorentz metric.
Let $d:\Lambda^r(M)\rightarrow\Lambda^{r+1}(M)$ be the exterior differential and let $\delta:\Lambda^r(M)\rightarrow\Lambda^{r-1}(M)$ be the formal adjoint of $d$  with respect to  \eqref{1.1}. We have $$\delta=\ast d\ast.$$
 Denote by $\Lambda(M)$ the set of all differential forms on $M$. We have
\begin{equation*}
\Lambda(M)=\Lambda^0(M)\oplus\Lambda^1(M)\oplus\Lambda^2(M)\oplus\Lambda^3(M)\oplus\Lambda^4(M).
\end{equation*}
Let $\Omega\in\Lambda(M)$
be an inhomogeneous differential form, i.e.
\begin{equation}\label{1.2}
\Omega=\sum_{r=0}^4\overset{r}{\omega},
\end{equation}
where $\overset{r}{\omega}\in\Lambda^r(M)$.
 The Dirac-K\"{a}hler equation for a free electron is given by
\begin{equation}\label{1.3}
i(d+\delta)\Omega=m\Omega,
\end{equation}
where $i$ is the usual complex unit  and  $m$  is a mass parameter.
It is easy to show that Eq.~\eqref{1.3} is equivalent to the set of equations
\begin{eqnarray}\label{1.4}
i\delta\overset{1}{\omega}=m\overset{0}{\omega},\nonumber \\
i(d\overset{0}{\omega}+\delta\overset{2}{\omega})=m\overset{1}{\omega},\nonumber \\
i(d\overset{1}{\omega}+\delta\overset{3}{\omega})=m\overset{2}{\omega},\nonumber\\
i(d\overset{2}{\omega}+\delta\overset{4}{\omega})=m\overset{3}{\omega},\nonumber\\
id\overset{3}{\omega}=m\overset{4}{\omega}.
\end{eqnarray}
The operator $d+\delta$ is the analogue of the gradient operator in Minkowski spacetime
\begin{equation*}
\nabla=\sum_{\mu=0}^3\gamma_\mu\partial^\mu, \quad \mu=0,1,2,3,
\end{equation*}
 where $\gamma_\mu$ is the Dirac gamma matrix.
If one think of $\{\gamma_0, \gamma_1, \gamma_2, \gamma_3\}$ as a vector basis in spacetime, then the  gamma matrices $\gamma_\mu$ can be considered as generators of the Clifford algebra of spacetime $\mathcal{C}\ell(1,3)$ \cite{B, B1}.  Hestenes \cite{H2} calls  this algebra the spacetime algebra.   It is known that an inhomogeneous form $\Omega$ can be represented as element of $\mathcal{C}\ell(1,3)$.  Then the Dirac-K\"{a}hler equation can be written as the algebraic equation
 \begin{equation}\label{1.5}
  i\nabla\Omega=m\Omega, \quad \Omega\in\emph{C}\ell(1,3).
 \end{equation}
  Equation~\eqref{1.5} is equivalent to the four Dirac equations (traditional column-spinor equations) for a free electron.  Let $\emph{C}\ell^{ev}(1,3)$ be the even subalgebra of the  algebra $\emph{C}\ell(1,3)$.
Consider the equation
\begin{equation}\label{1.6}
 i\nabla\Omega^{ev}=m\Omega^{ev}\gamma_0 , \quad \Omega^{ev}\in\emph{C}\ell^{ev}(1,3).
 \end{equation}
 In  \cite{J},  this equation  is called  the "generalized bivector Dirac equation". Following Baylis \cite{B1} we call Eq.~\eqref{1.6} the Joyce equation. This equation is equivalent to two copies of the usual Dirac equation. For a deeper discussion of equivalence of Dirac formulations we refer the reader to \cite{JM}.  Equation~\eqref{1.6} admits the plane wave solution of the form
 \begin{equation}\label{1.7}
 \Phi=Ae^{\pm ip\cdot x},
 \end{equation}
 where $A\in\mathcal{C}\ell^{ev}(1,3)$ is a constant element,  $p=\{p_0, p_1, p_2, p_3\}$ is a four-momentum and $p\cdot x=p_\mu x^\mu$ (see \cite{B1} for more details).

It should be noted  that the graded algebra $\Lambda(M)$ endowed with the Clifford multiplication is an example of a Clifford algebra.
 In this case the basis covectors $e^\mu=dx^\mu$  are considered as generators of the Clifford algebra.
Let   $\Lambda^{ev}(M)=\Lambda^0(M)\oplus\Lambda^2(M)\oplus\Lambda^4(M)$.
Denote by $\Omega^{ev}$   the even  part of the form \eqref{1.2}.
 Then Eq.~\eqref{1.6} can be rewritten in terms of inhomogeneous forms as
  \begin{equation}\label{1.8}
 i(d+\delta)\Omega^{ev}=m\Omega^{ev} e^0, \quad \Omega^{ev}\in\Lambda^{ev}(M).
 \end{equation}
 We call this equation the Dirac-K\"{a}hler equation in the Joyce form.

 In this paper, we focus on the construction of a discrete version of the plane wave solution \eqref{1.7} for a discrete counterpart of
 Eq.~\eqref{1.8}. In \cite{S3}, the same problem was considered by using a discretisation scheme based on a combinatorial double complex construction. However, this construction generates difference operators of the forward type only and it is not enough to obtain an exact geometric counterpart of the Clifford algebra $\emph{C}\ell(1,3)$. For this purpose, both forward and backward differences are needed. This fact is well-known \cite{Rabin}. As a new result, we announce here the construction of a discrete plane wave solution based on both forward and backward difference operators. This seems to suggest that such a discrete model will be more suitable for  applied purposes.

\section{Combinatorial model and difference operators}

A combinatorial model of Minkowski space  and a discretisation scheme are adopted from \cite{S2}.
For the convenience of the reader we briefly repeat the relevant material from \cite{S2}
without proofs, thus making our presentation self-contained.
Following \cite{Dezin}, let the tensor product  $C(4)=C\otimes C\otimes C\otimes C$
of a   1-dimensional complex be a combinatorial model of Euclidean space
 ${\mathbb R}^4$. The 1-dimensional complex $C$ is defined in the following way.
 Introduce  the sets  $\{x_\kappa\}$ and $\{e_{\kappa}\}$, $\kappa\in {\mathbb Z}$.
Let $C^0$ and $C^1$ be the free abelian groups of 0-dimensional and 1-dimensional
chains generated by $\{x_\kappa\}$ and $\{e_{\kappa}\}$.
The free abelian group is understood as the direct sum of infinity cyclic groups generated by $\{x_\kappa\}$, $\{e_{\kappa}\}$.
 The boundary operator $\partial: C^0\rightarrow 0,  \quad   \partial: C^1\rightarrow C^0$
 is given by
\begin{equation}\label{2.1}
\partial x_\kappa=0, \qquad  \partial e_\kappa=x_{\kappa+1}-x_\kappa.
\end{equation}
The definition is extended to arbitrary chains by linearity.
The direct sum $C=C^0\oplus C^1$ with the boundary operator $\partial$
defines the 1-dimensional complex. It is known that a free abelian group is an abelian group with basis.   One can regard the sets $\{x_\kappa\}$,  $\{e_{\kappa}\}$ as  sets of basis elements of the groups  $C^0$ and $C^1$. Geometrically we can interpret the 0-dimensional basis elements $x_\kappa$ as points of the real line and the 1-dimensional basis elements $e_\kappa$ as open intervals between points, i.e. $e_\kappa=(x_\kappa, x_{\kappa+1})$. We call the complex  $C$ a combinatorial real line.

Multiplying the basis elements $x_\kappa$, $e_\kappa$ in various way we obtain
basis elements of $C(4)$.
Let $s_k$ be an arbitrary  basis element of $C(4)$. Then we have
\begin{equation}\label{2.2}
s_k=s_{k_0}\otimes s_{k_1}\otimes s_{k_2}\otimes s_{k_3},
\end{equation}
 where $s_{k_\mu}$
is either  $x_{k_\mu}$ or  $e_{k_\mu}$ and $k=(k_0,k_1,k_2,k_3)$,  $k_\mu\in \mathbb{Z}$. Let us denote by $s_k^{(r)}$ the $r$-dimensional basis element of $C(4)$.
The dimension $r$ of  $s_k^{(r)}$ is given by the number of factors $e_{k_\mu}$ that appear in it. The notation $s_k^{(r)}$ means that
the product \eqref{2.2} contains exactly $r$ $1$-dimensional elements $e_{k_\mu}$ and $4-r$
 $0$-dimensional elements  $x_{k_\mu}$,  and  the superscript $(r)$ indicates also a position  of $e_{k_\mu}$ in $s_k^{(r)}$.
For example, the  2-dimensional basis elements
of $C(4)$ can be written as
\begin{align*}
e_k^{01}=e_{k_0}\otimes e_{k_1}\otimes x_{k_2}\otimes x_{k_3},  \qquad
e_k^{12}=x_{k_0}\otimes e_{k_1}\otimes e_{k_2}\otimes x_{k_3},\nonumber \\
e_k^{02}=e_{k_0}\otimes x_{k_1}\otimes e_{k_2}\otimes x_{k_3}, \qquad
e_k^{13}=x_{k_0}\otimes e_{k_1}\otimes x_{k_2}\otimes e_{k_3}, \nonumber \\
e_k^{03}=e_{k_0}\otimes x_{k_1}\otimes x_{k_2}\otimes e_{k_3}, \qquad
e_k^{23}=x_{k_0}\otimes x_{k_1}\otimes e_{k_2}\otimes e_{k_3}.
\end{align*}
Let $C(p)$ is the tensor product of $p$ factors of $C=C(1)$, $p=1,2,3$.
The definition \eqref{2.1} of  $\partial$  is  extended to an arbitrary basis element of $C(4)$ by induction on $p$.
Suppose that the boundary operator has been defined for any basis element $s_{k}\in C(p)$. Then we introduce it for the basis element $s_{k_\mu}\otimes s_{k}\in C(p+1)$ by the rule
\begin{equation}\label{2.3}
\partial(s_{k_\mu}\otimes s_{k})=\partial s_{k_\mu}\otimes s_{k}+Q(k_\mu)s_{k_\mu}\otimes\partial s_{k},
\end{equation}
where $s_{k_\mu}\in C$ and
$Q(k_\mu)$ is equal to $+1$ if $s_{k_\mu}=x_{k_\mu}$ and to $-1$ if $s_{k_\mu}=e_{k_\mu}$.
The operation \eqref{2.3} is linearly extended to arbitrary chains.
It is easy to check that  $\partial\partial a=0$ for any chain $a\in C(4)$.

Let $C(4)$ be a  combinatorial model of Minkowski space.  In what follows we assume that the index $k_0$ in \eqref{2.2} corresponds to the time coordinate of
$M$.  Hence, the basis elements of the indicated 1-dimensional complex $C$ will be written as  $x_{k_0}$ and $e_{k_0}$.

Let us now consider a dual complex to $C(4)$. We define it as the complex of cochains
$K(4)$ with complex coefficients. The complex $K(4)$
has the same  structure as $C(4)$, namely ${K(4)=K\otimes K\otimes K\otimes K}$, where $K$ is a dual
complex to the 1-dimensional complex $C$.  We will use a superscript to indicate a basis element of $K$. Let $x^\kappa$ and $e^\kappa$, $\kappa\in {\mathbb Z}$, be  the 0- and 1-dimensional basis elements of $K$. Then an arbitrary $r$-dimensional basis element of $K(4)$ can be written  as
$s_{(r)}^k=s^{k_0}\otimes s^{k_1}\otimes s^{k_2}\otimes s^{k_3}$, where $s^{k_{\mu}}$
is either  $x^{k_{\mu}}$ or  $e^{k_{\mu}}$ and $k=(k_0,k_1,k_2,k_3)$.   We will call cochains forms,
emphasizing their relationship with differential forms.
Denote by  $K^r(4)$ the set of all $r$-forms. Then $K(4)$ can be expressed by
\begin{equation*}
K(4)=K^0(4)\oplus K^1(4)\oplus K^2(4)\oplus K^3(4)\oplus K^4(4).
\end{equation*}
   The complex $K(4)$ is a discrete analogue of $\Lambda(M)$.
  Let $\overset{r}{\omega}\in K^r(4)$, then we have
  \begin{equation}\label{2.4}
  \overset{r}{\omega}=\sum_k\sum_{(r)} \omega_k^{(r)}s_{(r)}^k,
\end{equation}
where  $\omega_k^{(r)}\in\mathbb{C}$.

As in \cite{Dezin}, we define the pairing (chain-cochain) operation for any basis elements
$\varepsilon_k\in C(4)$,  $s^k\in K(4)$ by the rule
\begin{equation}\label{2.5}
\langle\varepsilon_k, \ s^k\rangle=\left\{\begin{array}{l}0, \quad \varepsilon_k\ne s_k\\
                            1, \quad \varepsilon_k=s_k.
                            \end{array}\right.
\end{equation}
The operation \eqref{2.5} is linearly extended to arbitrary chains and cochains.

The coboundary operator $d^c: K^r(4)\rightarrow K^{r+1}(4)$ is defined by
\begin{equation}\label{2.6}
\langle\partial a, \ \overset{r}{\omega}\rangle=\langle a, \ d^c\overset{r}{\omega}\rangle,
\end{equation}
where $a\in C(4)$ is an $r+1$ dimensional chain. The operator $d^c$ is an analog of the exterior differential.
From the above it follows that
\begin{equation*}\label{}
 d^c\overset{4}{\omega}=0 \quad \mbox{and} \quad d^cd^c\overset{r}{\omega}=0 \quad \mbox{for any} \quad r.
\end{equation*}
It is convenient to
introduce the shift operators  $\tau_\mu$ and $\sigma_\mu$ in the set of indices by
\begin{equation}\label{2.7}\tau_\mu k=(k_0,...
 k_\mu+1,...k_3), \quad
 \sigma_\mu k=(k_0,...k_\mu-1,...k_3), \quad \mu=0,1,2,3.
  \end{equation}
    Let the difference operators $\Delta^+_\mu$ and $\Delta^-_\mu$ be defined by
\begin{equation}\label{2.8}
\Delta^+_\mu\omega_k^{(r)}=\omega_{\tau_\mu k}^{(r)}-\omega_k^{(r)},
\end{equation}
\begin{equation}\label{2.9}
\Delta^-_\mu\omega_k^{(r)}=\omega_{k}^{(r)}-\omega_{\sigma_{\mu} k}^{(r)},
\end{equation}
where  $\omega_k^{(r)}\in\mathbb{C}$ is a component of $\overset{r}{\omega}\in K^r(4)$.
It is clear, that
\begin{equation*}\label{}
\Delta^-_\mu\omega_k^{(r)}=\Delta^+_\mu\omega_{\sigma_{\mu} k}^{(r)}, \qquad
\Delta^+_\mu\omega_k^{(r)}=\Delta^-_\mu\omega_{\tau_{\mu} k}^{(r)}.
\end{equation*}
Note that it is enough to use one of the two above defined difference operators to describe discrete analogs of $d$ and $\delta$, as it is shown in \cite{S1},  but in the proposed approach both forward and backward differences are needed to construct a discrete version of the plane wave solution.
It is helpful here to write \eqref{2.4} more explicitly as
\begin{equation*}
\overset{0}{\omega}=\sum_k\overset{0}{\omega}_kx^k,  \quad  \overset{2}{\omega}=\sum_k\sum_{\mu<\nu} \omega_k^{\mu\nu}e_{\mu\nu}^k, \quad \overset{4}{\omega}=\sum_k\overset{4}{\omega}_ke^k,
\end{equation*}
\begin{equation*}
\overset{1}{\omega}=\sum_k\sum_{\mu=0}^3\omega_k^\mu e_\mu^k, \quad
\overset{3}{\omega}=\sum_k\sum_{\iota<\mu<\nu} \omega_k^{\iota\mu\nu}e_{\iota\mu\nu}^k,
\end{equation*}
where
\begin{equation}\label{2.10}
x^k=x^{k_0}\otimes x^{k_1}\otimes x^{k_2}\otimes x^{k_3}, \quad  e^k=e^{k_0}\otimes e^{k_1}\otimes e^{k_2}\otimes e^{k_3}
\end{equation}
are the 0-, 4-dimensional basis elements of $K(4)$, and
 $e_\mu^k$, $e_{\mu\nu}^k$ and  $e_{\iota\mu\nu}^k$ are the 1-, 2- and 3-dimensional basis elements of $K(4)$.

Using \eqref{2.3}, \eqref{2.6} and \eqref{2.8} we can calculate
\begin{equation}\label{2.11}
d^c\overset{0}{\omega}=\sum_k\sum_{\mu=0}^3(\Delta^+_\mu\overset{0}{\omega}_k)e_\mu^k, \qquad
d^c\overset{1}{\omega}=\sum_k\sum_{\mu<\nu}(\Delta^+_\mu\omega_k^\nu-\Delta^+_\nu\omega_k^\mu)e_{\mu\nu}^k,
\end{equation}
\begin{align}\label{2.12}
d^c\overset{2}{\omega}=\sum_k\big[(\Delta^+_0\omega_k^{12}-\Delta^+_1\omega_k^{02}+\Delta^+_2\omega_k^{01})e_{012}^k\nonumber \\
+(\Delta^+_0\omega_k^{13}-\Delta^+_1\omega_k^{03}+\Delta^+_3\omega_k^{01})e_{013}^k \nonumber \\
+(\Delta^+_0\omega_k^{23}-\Delta^+_2\omega_k^{03}+\Delta^+_3\omega_k^{02})e_{023}^k \nonumber \\
+(\Delta^+_1\omega_k^{23}-\Delta^+_2\omega_k^{13}+\Delta^+_3\omega_k^{12})e_{123}^k\big],
\end{align}
\begin{equation}\label{2.13}
d^c\overset{3}{\omega}=\sum_k(\Delta^+_0\omega_k^{123}-\Delta^+_1\omega_k^{023}+\Delta^+_2\omega_k^{013}-\Delta^+_3\omega_k^{012})e^k.
\end{equation}

Let us now introduce  a $\cup$-multiplication of discrete forms which is an analog of the
exterior multiplication for differential forms.
Denote by  $K(p)$  the tensor product of $p$ factors of the 1-dimensional complex $K=K(1)$.
For the basis elements of $K$ the $\cup$-multiplication is defined as follows
\begin{equation}\label{2.14}
x^{k_\mu}\cup x^{k_\mu}=x^{k_\mu}, \quad e^{k_\mu}\cup x^{k_\mu+1}=e^{k_\mu},
\quad x^{k_\mu}\cup e^{k_\mu}=e^{k_\mu}, \quad {k_\mu}\in{\mathbb Z},
\end{equation}
supposing the product to be zero in all other case. To  arbitrary basis elements of $K(p)$, $p=2,3,4$, the definition \eqref{2.14}
is extended by induction on  $p$ (see \cite{Dezin} and \cite{S1} for more details).
 Again, to arbitrary discrete forms the
$\cup$-multiplication is extended linearly.

It is important to note that the definition above is suitable to deal with a discrete version of the Leibniz rule. The following result was proven in \cite{Dezin}.

\begin{prop}
Let $\varphi\in K^r(4)$ and $\psi\in K^q(4)$ be arbitrary discrete forms.
Then
\begin{equation}\label{2.15}
 d^c(\varphi\cup\psi)=d^c\varphi\cup\psi+(-1)^r\varphi\cup
d^c\psi.
\end{equation}
\end{prop}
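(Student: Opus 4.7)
The plan is to reduce the identity \eqref{2.15} to a statement about basis elements by bilinearity of both sides, and then prove it by induction on the number $p$ of tensor factors used to build $K(p)$. This induction mirrors the construction: both the coboundary $d^c$ (which is dual to the inductive boundary operator \eqref{2.3}) and the $\cup$-product (the inductive extension of \eqref{2.14}) are defined by recursion on tensor factors, so the inductive setup matches the objects perfectly.

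For the base case $p=1$, only $0$- and $1$-forms exist in $K$, so the task reduces to checking \eqref{2.15} for each pair of basis elements from $\{x^\kappa, e^\kappa\}$. Using $d^c x^\kappa = e^{\kappa-1} - e^\kappa$ (read off from \eqref{2.11} in one dimension) and $d^c e^\kappa = 0$, together with the three nonzero products from \eqref{2.14}, one verifies each combination by hand: the pair $(x^\kappa, x^\kappa)$ gives $d^c x^\kappa$ on both sides, the pair $(x^\kappa, e^\kappa)$ and the pair $(e^\kappa, x^{\kappa+1})$ both yield $0=0$ after cancellation, and all remaining combinations produce $0$ on both sides.

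For the inductive step, write a basis element of $K(p+1)$ as $s^{k_0}\otimes\eta$ with $s^{k_0}\in K$ and $\eta\in K(p)$, and similarly $s^{l_0}\otimes\theta$ for the second form. Expanding $d^c((s^{k_0}\otimes\eta)\cup(s^{l_0}\otimes\theta))$ by first applying the tensor-product rule for $\cup$ and then the tensor-product rule for $d^c$ (whose sign is governed by the factor $Q(k_0)$ in \eqref{2.3}) splits the differential into an action on the first tensor factor and an action on the second. Applying the base case to the first factor and the inductive hypothesis to the second produces four groups of terms, which must agree with the four groups obtained from expanding $d^c\varphi\cup\psi+(-1)^r\varphi\cup d^c\psi$ in the same way. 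The degree decomposition $r=r_0+r'$, with $r_0\in\{0,1\}$ the degree of $s^{k_0}$ and $r'$ the degree of $\eta$, is the key combinatorial identity: every $e$-factor that the differential crosses contributes one factor of $Q=-1$, and these factors collectively realise the single global sign $(-1)^r$.

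The main obstacle is sign bookkeeping. One has to verify that the signs produced by the $Q(k_\mu)$-factors in the inductive definition of $d^c$, combined with whatever sign convention the Dezin-style tensor-product extension of $\cup$ carries, reorganise into exactly $(-1)^r$ as required by \eqref{2.15}. Once this matching of signs is carried out on basis elements, linearity of $d^c$, $\cup$, and $(-1)^r$ in the graded variable closes the argument for arbitrary $\varphi\in K^r(4)$ and $\psi\in K^q(4)$.
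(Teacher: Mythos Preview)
The paper does not supply its own proof of this proposition; immediately before the statement it says that the result ``was proven in \cite{Dezin}'' and gives no argument. So there is no in-paper proof to compare against.

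Your inductive strategy---reducing to basis elements and inducting on the number $p$ of tensor factors, using the tensor-recursive definitions of both $\partial$ (hence $d^c$) in \eqref{2.3} and of the $\cup$-product from \eqref{2.14}---is exactly the standard route in Dezin's framework and is correct in outline. The base case you check is right (in one dimension the verification is essentially trivial because $e^j\cup e^\kappa=0$), and the observation that $Q(k_\mu)=(-1)^{r_0}$ with $r_0\in\{0,1\}$ the degree of the first factor is what makes the signs line up with the global $(-1)^r$. One point you leave implicit and would need to make precise in a full write-up is the exact form of the inductive extension of $\cup$ to $K(p+1)$: the paper only says this is done ``by induction on $p$'' and refers to \cite{Dezin,S1}, so you would have to state that rule (including its sign, which encodes the Koszul-type commutation of the second factor of $\varphi$ past the first factor of $\psi$) before the four-term comparison in your inductive step can actually be carried out. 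With that rule written down, the sign matching you describe goes through.
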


Consider  a discrete analog of the Hodge star operator
$\ast$. For simplicity of  notation, we continue to write $\ast$ in the discrete case.
Define the operation $\ast: K^r(4)\rightarrow K^{4-r}(4)$ for an arbitrary basis element $s^k=s^{k_0}\otimes  s^{k_1}\otimes s^{k_2}\otimes s^{k_3}$ by the rule
\begin{equation}\label{2.16}
 s^k\cup\ast s^k=Q(k_0) e^{k},
\end{equation}
where $Q(k_0)$ is equal to $+1$ if $s^{k_0}=x^{k_0}$ and
to $-1$ if  $s^{k_0}= e^{k_0}$, and $e^k$ is given by  \eqref{2.10}.
 For example, for the 1-dimensional basis elements
 $e_\mu^k$  we have $e_0^k\cup\ast e_0^k=-e^k$ and  $e_\mu^k\cup\ast e_\mu^k=e^k$ \ for $\mu=1,2,3$. It is clear that  the definition  \eqref{2.16}  preserves the  Lorentz  signature of metric in our  discrete model. From  \eqref{2.16}  a more detailed calculation leads to
\begin{align}\label{2.17}
\ast x^k=e^k, \qquad
\ast e^k=-x^{\tau k},
\end{align}
\begin{equation}\label{2.18}
\ast e_0^k=-e_{123}^{\tau_0 k}, \qquad \ast e_1^k=-e_{023}^{\tau_1 k}, \qquad
\ast e_2^k=e_{013}^{\tau_2 k}, \qquad \ast e_3^k=-e_{012}^{\tau_3 k},
\end{equation}
\begin{align}\label{2.19}
\ast e_{01}^k&=-e_{23}^{\tau_{01} k}, \qquad \ast e_{02}^k=e_{13}^{\tau_{02} k}, \qquad \ast e_{03}^k=-e_{12}^{\tau_{03} k}, \nonumber \\
\ast e_{12}^k&=e_{03}^{\tau_{12} k}, \qquad  \ast e_{13}^k=-e_{02}^{\tau_{13} k}, \qquad \ast e_{23}^k=e_{01}^{\tau_{23} k},
\end{align}
\begin{equation}\label{2.20}
\ast e_{012}^k=-e_3^{\tau_{012} k}, \quad \ast e_{013}^k=e_2^{\tau_{013} k}, \quad
\ast e_{023}^k=-e_1^{\tau_{023} k}, \quad \ast e_{123}^k=-e_0^{\tau_{123} k}.
\end{equation}
Here  $\tau_{\mu\nu}k=\tau_{\mu}\tau_{\nu}k$ and $\tau_{\mu\nu\iota}k=\tau_{\mu}\tau_{\nu}\tau_{\iota}k$, where $\tau_{\mu}$ is defined by  \eqref{2.7}, and
\begin{equation}\label{2.21}
\tau k=(k_0+1,\ k_1+1,\ k_2+1,\ k_3+1).
   \end{equation}
   The operation $\ast$ is linearly extended to arbitrary forms.
  It is easy to check that
 \begin{equation*}
\ast\ast s^k_{(r)}=(-1)^{r+1}s^{\tau k}_{(r)},
 \end{equation*}
 where $s^k_{(r)}$ is an $r$-dimensional basic element of $K(4)$.
 Then if we perform the  operation $\ast$ twice on any $r$-form $\overset{r}{\omega}\in K(4)$,  we  obtain
 \begin{equation*}
\ast\ast\overset{r}{\omega}=(-1)^{r+1}\sum_k\sum_{(r)}\omega_k^{(r)}s^{\tau k}_{(r)}=
(-1)^{r+1}\sum_k\sum_{(r)}\omega_{\sigma k}^{(r)}s^k_{(r)},
 \end{equation*}
where $\tau k$ is given by \eqref{2.21} and
\begin{equation}\label{2.22}
\sigma k=(k_0-1,\ k_1-1,\ k_2-1,\ k_3-1).
 \end{equation}
Hence the operation $(\ast)^2$ is equivalent to a shift with corresponding sign. This is slightly different from the continuum case, where applying the Hodge star operator twice leaves a differential form unchanged up to sign,
 i.e. $(\ast)^2=\pm 1$.

Let us consider the  $4$-dimensional finite chain  $e_n\subset C(4)$ of the form:
\begin{equation}\label{2.23}
e_n=\sum_ke_k, \quad k_\mu=1,2, ...,n_\mu,
\end{equation}
where $n_\mu\in \mathbb{N}$ is a fixed number for each $\mu=0,1,2,3$ and $e_k$ is given by \eqref{2.10}.
 The finite sum \eqref{2.23} is the set defined by a finite number of $4$-dimensional basis elements of $C(4)$. This set imitates a domain of $M$.

Suppose that the  $r$-form  \eqref{2.4} is vanished on $C(4)\setminus e_n$, i.e. if $k_\mu<1$ or $k_\mu>n_\mu$ then  $\omega_k^{(r)}=0$ for any $r$ and $k$.
 Then for forms
$\varphi, \ \omega\in K^r(4)$ of the same degree $r$ the inner
 product over the set  $e_n$ \eqref{2.23} is defined  by the rule
 \begin{equation}\label{2.24}
 (\varphi, \ \omega)_{e_n}=\langle e_n, \ \varphi\cup\ast\overline{\omega}\rangle,
 \end{equation}
  where $\overline{\omega}$ denotes the complex conjugate of the form $\omega$. For  forms of different degrees the product \eqref{2.24} is set equal to zero.
 See also \cite{S1}.
The definition \eqref{2.24} imitates correctly the continuum case \eqref{1.1} and the Lorentz metric structure is still captured here. Using \eqref{2.5}, \eqref{2.14} and \eqref{2.17}--\eqref{2.20}
 we obtain
\begin{align*}\label{}
(\overset{0}{\varphi}, \ \overset{0}{\omega})_{e_n}&=\sum_k\overset{0}{\varphi}_k\overset{0}{\overline{\omega}}_k, \qquad
(\overset{4}{\varphi}, \ \overset{4}{\omega})_{e_n}=-\sum_k\overset{4}{\varphi}_k\overset{4}{\overline{\omega}}_k,\\
(\overset{1}{\varphi}, \ \overset{1}{\omega})_{e_n}&=\sum_k\big(-\varphi_k^0\overline{\omega}_k^0+\varphi_k^1\overline{\omega}_k^1+\varphi_k^2\overline{\omega}_k^2+\varphi_k^3\overline{\omega}_k^3\big), \\
(\overset{2}{\varphi}, \ \overset{2}{\omega})_{e_n}&=\sum_k\big(-\varphi_k^{01}\overline{\omega}_k^{01}-\varphi_k^{02}\overline{\omega}_k^{02}-\varphi_k^{03}\overline{\omega}_k^{03}
+\varphi_k^{12}\overline{\omega}_k^{12}+\varphi_k^{13}\overline{\omega}_k^{13}+\varphi_k^{23}\overline{\omega}_k^{23}\big), \\
(\overset{3}{\varphi}, \ \overset{3}{\omega})_{e_n}&=\sum_k\big(-\varphi_k^{012}\overline{\omega}_k^{012}-\varphi_k^{013}\overline{\omega}_k^{013}-\varphi_k^{023}\overline{\omega}_k^{023}
+\varphi_k^{123}\overline{\omega}_k^{123}\big).
\end{align*}
The following result is taken from \cite{S1}.
\begin{prop}
 Let $\overset{r}\varphi\in K^r(4)$  and $\overset{r+1}\omega\in K^{r+1}(4)$,  $r=0,1,2,3$. Then we have
\begin{equation}\label{2.25}
 (d^c\overset{r}\varphi, \ \overset{r+1}\omega)_{e_n}=(\overset{r}\varphi, \ \delta^c\overset{r+1}\omega)_{e_n},
\end{equation}
 where
 \begin{equation}\label{2.26}
 \delta^c\overset{r+1}\omega=(-1)^{r+1}\ast^{-1}d^c\ast\overset{r+1}\omega
 \end{equation}
  is the
operator formally adjoint of $d^c$.
\end{prop}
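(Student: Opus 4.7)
The plan is to argue adjointness in the familiar ``integration by parts'' style, transporting $d^c$ across the pairing by means of the discrete Leibniz rule (Proposition 2.1) and the chain--cochain duality \eqref{2.6}, and then collecting the right sign by unwinding the definition \eqref{2.26} of $\delta^c$.

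First I would rewrite the left-hand side using definition \eqref{2.24}:
\begin{equation*}
(d^c\overset{r}{\varphi},\ \overset{r+1}{\omega})_{e_n}=\langle e_n,\ d^c\overset{r}{\varphi}\cup\ast\overline{\omega}\rangle.
\end{equation*}
Since $\ast\overline{\omega}\in K^{3-r}(4)$, the form $\overset{r}{\varphi}\cup\ast\overline{\omega}$ is a $3$-form, so $d^c(\overset{r}{\varphi}\cup\ast\overline{\omega})$ is a $4$-form pairable with the $4$-chain $e_n$. Applying the Leibniz formula \eqref{2.15} and solving for $d^c\overset{r}{\varphi}\cup\ast\overline{\omega}$ gives
\begin{equation*}
d^c\overset{r}{\varphi}\cup\ast\overline{\omega}=d^c(\overset{r}{\varphi}\cup\ast\overline{\omega})+(-1)^{r+1}\overset{r}{\varphi}\cup d^c\ast\overline{\omega}.
\end{equation*}
Pairing with $e_n$ and using \eqref{2.6} converts the first term into a boundary contribution $\langle\partial e_n,\ \overset{r}{\varphi}\cup\ast\overline{\omega}\rangle$. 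The next step is to show that this boundary term vanishes under the standing support hypothesis that $\omega_k^{(r)}=0$ whenever $k_\mu<1$ or $k_\mu>n_\mu$, so that
\begin{equation*}
(d^c\overset{r}{\varphi},\ \overset{r+1}{\omega})_{e_n}=(-1)^{r+1}\langle e_n,\ \overset{r}{\varphi}\cup d^c\ast\overline{\omega}\rangle.
\end{equation*}

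For the right-hand side I would note that $d^c$ and $\ast$ act on components by real coefficients only (cf.\ \eqref{2.11}--\eqref{2.13} and \eqref{2.17}--\eqref{2.20}), hence they commute with complex conjugation. Therefore $\overline{\delta^c\overset{r+1}{\omega}}=(-1)^{r+1}\ast^{-1}d^c\ast\overline{\omega}$, and after applying $\ast$ one obtains $\ast\overline{\delta^c\overset{r+1}{\omega}}=(-1)^{r+1}d^c\ast\overline{\omega}$. Substituting into \eqref{2.24} yields
\begin{equation*}
(\overset{r}{\varphi},\ \delta^c\overset{r+1}{\omega})_{e_n}=(-1)^{r+1}\langle e_n,\ \overset{r}{\varphi}\cup d^c\ast\overline{\omega}\rangle,
\end{equation*}
which matches the expression obtained above and closes the argument.

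The main obstacle is the vanishing of the boundary term $\langle\partial e_n,\ \overset{r}{\varphi}\cup\ast\overline{\omega}\rangle$. By \eqref{2.3} the chain $\partial e_n$ splits into $3$-dimensional faces lying on the geometric boundary of the box $1\le k_\mu\le n_\mu$; because the $\cup$-product in \eqref{2.14} forces $\overset{r}{\varphi}\cup\ast\overline{\omega}$ to evaluate on each such face through a component of $\varphi$ or $\omega$ at an index with some $k_\mu\in\{0,n_\mu+1\}$ (the shifts $\tau_\mu$ built into $\ast$ in \eqref{2.17}--\eqref{2.20} must be tracked carefully here), the support hypothesis kills every contribution face by face. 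I would verify this dimension by dimension for $r=0,1,2,3$, checking that in each case one of the two factors in the cup product hits an index outside $\{1,\dots,n_\mu\}$ on every boundary $3$-cell; once this bookkeeping is done, the rest of the argument is the purely formal manipulation sketched above.
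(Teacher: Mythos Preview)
The paper does not supply its own proof of this proposition; it simply quotes the result from \cite{S1}. Your integration-by-parts argument via the Leibniz rule \eqref{2.15} and the duality \eqref{2.6} is the natural route and is correct in outline.

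The boundary-term verification you flag as the main obstacle does go through under the support hypothesis stated just before \eqref{2.24}, and the mechanism is exactly the one you anticipate. On each $3$-dimensional face of $\partial e_n$ normal to the $\mu$-direction, the shift built into $\ast$ by \eqref{2.17}--\eqref{2.20} combines with the rule $e^{k_\mu}\cup x^{k_\mu+1}=e^{k_\mu}$ from \eqref{2.14} so that the component of $\overset{r}{\varphi}\cup\ast\overline{\omega}$ at that face has the form $\varphi_k^{(\cdot)}\,\overline{\omega^{(\cdot)}_{\sigma_\mu k}}$ (up to sign). Hence on the ``high'' face $k_\mu=n_\mu+1$ the $\varphi$-factor is evaluated outside the box, while on the ``low'' face $k_\mu=1$ the $\omega$-factor is evaluated at $k_\mu=0$; either way the contribution vanishes. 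Once this bookkeeping is done, the remaining identification $\ast\overline{\delta^c\overset{r+1}{\omega}}=(-1)^{r+1}d^c\ast\overline{\omega}$ is exactly as you wrote.
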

Here $\ast^{-1}$ is the inverse of $\ast$, i.e. $\ast\ast^{-1}=1$.

The operator $\delta^c: K^{r+1}(4) \rightarrow K^r(4)$ is a discrete analog of the codifferential $\delta$. For the 0-form $\overset{0}{\omega}\in K^0(4)$ we have $\delta^c\overset{0}{\omega}=0$.
It is obvious from \eqref{2.25} that
$\delta^c\delta^c\overset{r}{\omega}=0$ \ for any $r=1,2,3,4.$
 Using \eqref{2.11}--\eqref{2.13} and \eqref{2.25} we can calculate
\begin{equation}\label{2.27}
\delta^c\overset{1}{\omega}=\sum_k(\Delta^-_0\omega_{k}^{0}-\Delta^-_1\omega_{k}^{1}-\Delta^-_2\omega_{k}^{2}-\Delta^-_3\omega_{k}^{3})x^k,
\end{equation}
\begin{align}\label{2.28} \nonumber
\delta^c\overset{2}{\omega}=\sum_k\big[(\Delta^-_1\omega_{k}^{01}+\Delta^-_2\omega_{k}^{02}+\Delta^-_3\omega_{k}^{03})e_{0}^k\\ \nonumber
+(\Delta^-_0\omega_{k}^{01}+\Delta^-_2\omega_{k}^{12}+\Delta^-_3\omega_{k}^{13})e_{1}^k\\ \nonumber
+(\Delta^-_0\omega_{k}^{02}-\Delta^-_1\omega_{k}^{12}+\Delta^-_3\omega_{k}^{23})e_{2}^k\\
+(\Delta^-_0\omega_{k}^{03}-\Delta^-_1\omega_{k}^{13}-\Delta^-_2\omega_{k}^{23})e_{3}^k\big],
\end{align}
\begin{align}\label{2.29} \nonumber
\delta^c\overset{3}{\omega}=\sum_k\big[(-\Delta^-_2\omega_{k}^{012}-\Delta^-_3\omega_{k}^{013})e_{01}^k+
(\Delta^-_1\omega_{k}^{012}-\Delta^-_3\omega_{k}^{023})e_{02}^k\\ \nonumber
+(\Delta^-_1\omega_{k}^{013}+\Delta^-_2\omega_{k}^{023})e_{03}^k
+(\Delta^-_0\omega_{k}^{012}-\Delta^-_3\omega_{k}^{123})e_{12}^k\\
+(\Delta^-_0\omega_{k}^{013}+\Delta^-_2\omega_{k}^{123})e_{13}^k
+(\Delta^-_0\omega_{k}^{023}-\Delta^-_1\omega_{k}^{123})e_{23}^k\big],
\end{align}
\begin{align}\label{2.30}
\delta^c\overset{4}{\omega}=\sum_k\big[(\Delta^-_3\omega_{k}^{4})e_{012}^k-(\Delta^-_2\omega_{k}^{4})e_{013}^k
+(\Delta^-_1\omega_{k}^{4})e_{023}^k+(\Delta^-_0\omega_{k}^{4})e_{123}^k\big].
\end{align}
 Note that formulas~\eqref{2.27}--\eqref{2.30} are essentially the same as the corresponding formulas from \cite{S1} up to notation.

As in the continuum case, the discrete codifferencial does not depend on the signature of the inner product, i.e. the operator $\delta^c$ is the same in both $(+---)$ and $(-+++)$ cases.
Recall that in the continuum case we have  $\delta=\ast d\ast.$ However in the discrete case  the operator $\delta^c$ is not equal to $\ast d^c\ast$. Indeed, let
\begin{equation*}
\delta^c\overset{r+1}{\omega}=\overset{r}{\psi}=\sum_k\sum_{(r)}\psi_k^{(r)} s^k_{(r)}, \quad r=0,1,2,3.
 \end{equation*}
 But an easy computation shows that
 \begin{equation*}
 \ast d^c\ast\overset{r+1}{\omega}=\sum_k\sum_{(r)}\psi_{\sigma k}^{(r)}s^k_{(r)},
 \end{equation*}
 where $\sigma k$ is given by \eqref{2.22}.
 Therefore in comparison with $\delta^c\overset{r+1}{\omega}$, the form $\ast d^c\ast\overset{r+1}{\omega}$ has the same components which are shifting to the left by one by all indexes.
 For example,
 \begin{equation*}
\ast d^c\ast\overset{1}{\omega}=\sum_k(\Delta^-_0\omega_{\sigma k}^{0}-\Delta^-_1\omega_{\sigma k}^{1}-\Delta^-_2\omega_{\sigma k}^{2}-\Delta^-_3\omega_{\sigma k}^{3})x^k
\end{equation*}
while $\delta^c\overset{1}{\omega}$ has  the view \eqref{2.27}.

 \section{Discrete Dirac-K\"{a}hler and Joyce equations}
 The linear map
\begin{equation*}
\Delta^c=-(d^c\delta^c+\delta^cd^c): \ K^r(4) \rightarrow K^r(4)
\end{equation*}
is called a discrete analogue of the Laplacian. It is clear that
\begin{equation*}
-(d^c\delta^c+\delta^cd^c)=(d^c-\delta^c)^2=-(d^c+\delta^c)^2.
\end{equation*}
Then a square root of the discrete Laplacian can be written either as $d^c-\delta^c$ or as $i(d^c+\delta^c)$, where $i$ is the usual complex unit.
Here we consider a discrete version of the Dirac-K\"{a}hler  equation with  $i(d^c+\delta^c)$.
 Note that in \cite{S1},  a discrete model of the Dirac-K\"{a}hler  equation has been constructed by using the operator $d^c-\delta^c$.
Let $\Omega$ be  a discrete inhomogeneous form, that is
\begin{equation}\label{3.1}
\Omega=\sum_{r=0}^4\overset{r}{\omega},
\end{equation}
where $\overset{r}{\omega}$ is given by \eqref{2.4}. An alternative notation includes $(\Omega)_r=\overset{r}{\omega}$ for the $r$-form part of an inhomogeneous form.

  Introduce a discrete analog of the Dirac-K\"{a}hler equation \eqref{1.3} by the rule
\begin{equation}\label{3.2}
i(d^c+\delta^c)\Omega=m\Omega,
\end{equation}
where  $m$ is a positive number (mass parameter).
We can write this equation more explicitly by separating its homogeneous components as
\begin{eqnarray}\label{3.3} \nonumber
i\delta^c\overset{1}{\omega}=m\overset{0}{\omega}, \quad i(d^c\overset{1}{\omega}+\delta^c\overset{3}{\omega})=m\overset{2}{\omega}, \quad
id^c\overset{3}{\omega}=m\overset{4}{\omega},\\
i(d^c\overset{0}{\omega}+\delta^c\overset{2}{\omega})=m\overset{1}{\omega}, \qquad
i(d^c\overset{2}{\omega}+\delta^c\overset{4}{\omega})=m\overset{3}{\omega}.
\end{eqnarray}
Using \eqref{2.11}--\eqref{2.13} and \eqref{2.27}--\eqref{2.30} Eqs.~\eqref{3.3} can be written as the set of following difference equations
\begin{align*}\label{}
i(\Delta^-_0\omega_k^{0}-\Delta^-_1\omega_k^{1}-\Delta^-_2\omega_k^{2}-\Delta^-_3\omega_k^{3})=m\overset{0}{\omega}_k,\\
i(\Delta^+_0\overset{0}{\omega}_k+\Delta^-_1\omega_k^{01}+\Delta^-_2\omega_k^{02}+\Delta^-_3\omega_k^{03})=m\omega_k^0,\\
i(\Delta^+_1\overset{0}{\omega}_k+\Delta^-_0\omega_k^{01}+\Delta^-_2\omega_k^{12}+\Delta^-_3\omega_k^{13})=m\omega_k^1,\\
i(\Delta^+_2\overset{0}{\omega}_k+\Delta^-_0\omega_k^{02}-\Delta^-_1\omega_k^{12}+\Delta^-_3\omega_k^{23})=m\omega_k^2,\\
i(\Delta^+_3\overset{0}{\omega}_k+\Delta^-_0\omega_k^{03}-\Delta^-_1\omega_k^{13}-\Delta^-_2\omega_k^{23})=m\omega_k^3,\\
i(\Delta^+_0\omega_k^1-\Delta^+_1\omega_k^0-\Delta^-_2\omega_k^{012}-\Delta^-_3\omega_k^{013})=m\omega_k^{01},\\
i(\Delta^+_0\omega_k^2-\Delta^+_2\omega_k^0+\Delta^-_1\omega_k^{012}-\Delta^-_3\omega_k^{023})=m\omega_k^{02},\\
i(\Delta^+_0\omega_k^3-\Delta^+_3\omega_k^0+\Delta^-_1\omega_k^{013}+\Delta^-_2\omega_k^{023})=m\omega_k^{03},\\
i(\Delta^+_1\omega_k^2-\Delta^+_2\omega_k^1+\Delta^-_0\omega_k^{012}-\Delta^-_3\omega_k^{123})=m\omega_k^{12},\\
i(\Delta^+_1\omega_k^3-\Delta^+_3\omega_k^1+\Delta^-_0\omega_k^{013}+\Delta^-_2\omega_k^{123})=m\omega_k^{13},\\
i(\Delta^+_2\omega_k^3-\Delta^+_3\omega_k^2+\Delta^-_0\omega_k^{023}-\Delta^-_1\omega_k^{123})=m\omega_k^{23}, \\
i(\Delta^+_0\omega_k^{12}-\Delta^+_1\omega_k^{02}+\Delta^+_2\omega_k^{01}+\Delta^-_3\overset{4}{\omega}_k)=m\omega_k^{012},\\
i(\Delta^+_0\omega_k^{13}-\Delta^+_1\omega_k^{03}+\Delta^+_3\omega_k^{01}-\Delta^-_2\overset{4}{\omega}_k)=m\omega_k^{013},\\
i(\Delta^+_0\omega_k^{23}-\Delta^+_2\omega_k^{03}+\Delta^+_3\omega_k^{02}+\Delta^-_1\overset{4}{\omega}_k)=m\omega_k^{023},\\
i(\Delta^+_1\omega_k^{23}-\Delta^+_2\omega_k^{13}+\Delta^+_3\omega_k^{12}+\Delta^-_0\overset{4}{\omega}_k)=m\omega_k^{123}, \\
i(\Delta^+_0\omega_k^{123}-\Delta^+_1\omega_k^{023}+\Delta^+_2\omega_k^{013}-\Delta^+_3\omega_k^{012})=m\overset{4}{\omega}_k.
\end{align*}
 Note that the equations above contain the difference operators  both the forward and backward type. This is in contrast with the situation    described in \cite{S1}, where difference operators of the forward type only are used for the discrete construction.

Let us define  the Clifford multiplication in  $K(4)$ by the following rules:
\begin{align*}
&\mbox{(a)} \quad x^kx^k=x^k, \quad x^ke^k_\mu=e^k_\mu x^k=e^k_\mu,\\
&\mbox{(b)} \quad e^k_\mu e^k_\nu+e^k_\nu e^k_\mu=2g_{\mu\nu}x^k, \quad g_{\mu\nu}=\mbox{diag}(1,-1,-1,-1),\\
&\mbox{(c)} \quad e^k_{\mu_1}\cdots e^k_{\mu_s}=e^k_{\mu_1\cdots \mu_s} \quad \mbox{for} \quad 0\leq \mu_1<\cdots <\mu_s\leq 3,
\end{align*}
supposing the product to be zero in all other cases.

The operation is linearly extended to arbitrary discrete forms.

Consider the following unit forms
\begin{equation}\label{3.4}
x=\sum_kx^k, \quad e=\sum_ke^k, \quad e_\mu=\sum_ke_\mu^k, \quad e_{\mu\nu}=\sum_ke_{\mu\nu}^k.
\end{equation}
Note that the unit 0-form $x$ plays  a role of the unit element in $K(4)$ with respect to the Clifford multiplication, i.e. for any $r$-form  $\overset{r}{\omega}$ we have
\begin{equation*}
x\overset{r}{\omega}=\overset{r}{\omega}x=\overset{r}{\omega}.
\end{equation*}
The following is straightforward.
\begin{prop}
\begin{equation}\label{3.5}
e_\mu e_\nu+e_\nu e_\mu=2g_{\mu\nu}x, \qquad \mu,\nu=0,1,2,3.
\end{equation}
\end{prop}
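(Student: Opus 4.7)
The plan is straightforward: distribute the Clifford product over the sums defining $e_\mu$ and $e_\nu$, show that all cross terms at distinct lattice points vanish, and then apply rule (b) pointwise.

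First I would unpack the definitions. By \eqref{3.4},
\begin{equation*}
e_\mu e_\nu = \Bigl(\sum_k e_\mu^k\Bigr)\Bigl(\sum_{k'} e_\nu^{k'}\Bigr) = \sum_{k,k'} e_\mu^k\, e_\nu^{k'}.
\end{equation*}
The crucial observation is the last clause of the definition of Clifford multiplication, namely that the product is zero in all cases not listed explicitly in (a)--(c). In particular, rules (b) and (c) only specify products $e_\mu^k e_\nu^k$ with matching multi-index $k$; hence $e_\mu^k e_\nu^{k'} = 0$ whenever $k \neq k'$. This collapses the double sum above to the diagonal $\sum_k e_\mu^k e_\nu^k$, and analogously $e_\nu e_\mu = \sum_k e_\nu^k e_\mu^k$.

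Next I would add these two expressions and invoke rule (b) pointwise:
\begin{equation*}
e_\mu e_\nu + e_\nu e_\mu = \sum_k \bigl(e_\mu^k e_\nu^k + e_\nu^k e_\mu^k\bigr) = \sum_k 2 g_{\mu\nu} x^k = 2 g_{\mu\nu} \sum_k x^k = 2 g_{\mu\nu}\, x,
\end{equation*}
which is exactly \eqref{3.5}.

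There is really no obstacle here beyond making the zero-on-cross-terms convention explicit; once one accepts that the Clifford product localizes at a single multi-index $k$, the identity reduces to the pointwise anticommutation relation (b), which is taken as part of the definition. The only thing worth being careful about is that $e_\mu^k$ here denotes the basis $1$-cochain supported at the index $k$ (not some differential operator), so the global unit $1$-form $e_\mu$ truly splits as a diagonal sum under Clifford multiplication.
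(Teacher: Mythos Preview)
Your proof is correct and is precisely the straightforward verification the paper has in mind; the paper does not give an explicit argument, merely noting that the result is immediate from the definitions. Your only added value is making the ``zero on cross terms'' convention explicit, which is exactly the point that reduces the global identity to the pointwise rule (b).
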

\begin{prop}  For any inhomogeneous form $\Omega\in K(4)$ we have
\begin{equation}\label{3.6}
(d^c+\delta^c)\Omega=\sum_{r=0}^3\Big(\sum_{\mu=0}^3e_\mu\Delta^+_\mu\overset{r}{\omega}\Big)_{r+1}+
\sum_{r=1}^4\Big(\sum_{\mu=0}^3e_\mu\Delta^-_\mu\overset{r}{\omega}\Big)_{r-1},
\end{equation}
where
 $\Delta^+_\mu$  and $\Delta^-_\mu$ are the difference operators which act on each component of $\overset{r}{\omega}$ by the rules \eqref{2.8} and \eqref{2.9}.
\end{prop}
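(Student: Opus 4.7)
The plan is to establish the identity by splitting $\Omega$ into its homogeneous components and matching degrees. By linearity of $d^c$ and $\delta^c$, it is enough to prove, for each $r=0,1,2,3$,
$$ d^c \overset{r}{\omega} = \Big(\sum_{\mu=0}^{3} e_\mu \Delta^+_\mu \overset{r}{\omega}\Big)_{r+1}, $$
and, for each $r=1,2,3,4$,
$$ \delta^c \overset{r}{\omega} = \Big(\sum_{\mu=0}^{3} e_\mu \Delta^-_\mu \overset{r}{\omega}\Big)_{r-1}. $$
Summing these identities over $r$ then reassembles $(d^c+\delta^c)\Omega$ in the stated form.

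The key step is to describe how left Clifford multiplication by $e_\mu$ acts on a basis form $s_{(r)}^k$ at a fixed vertex $k$. Using rules (a)--(c) together with the anticommutation relation \eqref{3.5}, a short induction on $r$ shows that $e_\mu^k s_{(r)}^k$ either inserts the index $\mu$ (when $\mu$ does not already appear in $s_{(r)}^k$), producing a basis form of degree $r+1$ with an alternating sign determined by the insertion position, or contracts against it (when $\mu$ does appear as the $j$-th index), producing a basis form of degree $r-1$ with a sign $(-1)^{j-1} g_{\mu\mu}$. Products at distinct vertices vanish, since rule (c) ties all factors to the same superscript $k$. Consequently, only the degree-$(r+1)$ piece of $\sum_\mu e_\mu \Delta^+_\mu \overset{r}{\omega}$ carries information about $d^c$, while only the degree-$(r-1)$ piece of $\sum_\mu e_\mu \Delta^-_\mu \overset{r}{\omega}$ carries information about $\delta^c$.

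With this action in hand, I expand $\sum_\mu e_\mu \Delta^+_\mu \overset{r}{\omega}$ by inserting \eqref{2.8} in each component $\omega_k^{(r)}$, extract the degree-$(r+1)$ part, and compare coefficient-wise with \eqref{2.11}--\eqref{2.13}. The alternating signs from the insertion reproduce exactly the alternating signs in the $d^c$ formulas. The $\delta^c$ case is analogous, using \eqref{2.27}--\eqref{2.30} and \eqref{2.9}; here the factor $g_{\mu\mu}$ coming from Clifford contraction is precisely what encodes the Lorentzian signature visible in \eqref{2.27}. For instance, for $r=1$ the $0$-form part of $\sum_\mu e_\mu \Delta^-_\mu \overset{1}{\omega}$ collapses to $\sum_k\sum_\mu g_{\mu\mu}(\Delta^-_\mu \omega_k^\mu)x^k$, which is exactly \eqref{2.27}; for $r=1$ the $2$-form part of $\sum_\mu e_\mu \Delta^+_\mu \overset{1}{\omega}$ yields $\sum_k\sum_{\mu<\nu}(\Delta^+_\mu \omega_k^\nu - \Delta^+_\nu \omega_k^\mu)e^k_{\mu\nu}$, matching \eqref{2.11}.

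The main obstacle is not conceptual but combinatorial: the verification spans all $16$ scalar coefficients of an inhomogeneous form, and careful sign bookkeeping is required to reconcile the Clifford ordering convention in (c) with the lexicographic ordering of indices in the explicit formulas \eqref{2.12}, \eqref{2.13}, \eqref{2.28} and \eqref{2.29}. In practice it is cleanest to prove the action of $e_\mu^k$ on $s^k_{(r)}$ as a separate lemma so that the sign pattern is established once and for all, after which the degree-by-degree comparison becomes a mechanical check.
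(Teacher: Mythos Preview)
Your proposal is correct and follows essentially the same approach as the paper: reduce to proving $d^c\overset{r}{\omega}=\big(\sum_\mu e_\mu\Delta^+_\mu\overset{r}{\omega}\big)_{r+1}$ and $\delta^c\overset{r}{\omega}=\big(\sum_\mu e_\mu\Delta^-_\mu\overset{r}{\omega}\big)_{r-1}$ for each degree, then compare the Clifford expansion of $\sum_\mu e_\mu\Delta^\pm_\mu\overset{r}{\omega}$ against the explicit formulas \eqref{2.11}--\eqref{2.13} and \eqref{2.27}--\eqref{2.30}. The only difference is organizational: the paper carries out the raw computation degree by degree (in full for $r=0,2,4$, leaving $r=1,3$ as analogous), whereas you first isolate the insertion/contraction action of $e_\mu^k$ on basis forms as a lemma so that the sign bookkeeping is handled uniformly.
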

\begin{proof}
By \eqref{2.11} it is clear that
\begin{align*}
d^c\overset{0}{\omega}=\sum_{\mu=0}^3e_\mu\Delta^+_\mu\overset{0}{\omega}=\Big(\sum_{\mu=0}^3e_\mu\Delta^+_\mu\overset{0}{\omega}\Big)_1.
\end{align*}
In the case of the 2-form $\overset{2}{\omega}$ we compute
\begin{align*}
\sum_{\mu=0}^3e_\mu\Delta^{\pm}_\mu\overset{2}{\omega}=\Big(\sum_{\mu=0}^3e_\mu\Delta^{\pm}_\mu\overset{2}{\omega}\Big)_{1}+
\Big(\sum_{\mu=0}^3e_\mu\Delta^{\pm}_\mu\overset{2}{\omega}\Big)_{3}\\=
\sum_k[(\Delta^{\pm}_1\omega_k^{01}+\Delta^{\pm}_2\omega_k^{02}+\Delta^{\pm}_3\omega_k^{03})e_0^k\\
+(\Delta^{\pm}_0\omega_k^{01}+\Delta^{\pm}_2\omega_k^{12}+\Delta^{\pm}_3\omega_k^{13})e_1^k\\
+(\Delta^{\pm}_0\omega_k^{02}-\Delta^{\pm}_1\omega_k^{12}+\Delta^{\pm}_3\omega_k^{23})e_2^k\\+
(\Delta^{\pm}_0\omega_k^{03}-\Delta^{\pm}_1\omega_k^{13}-\Delta^{\pm}_2\omega_k^{23})e_3^k]\\
+\sum_k[(\Delta^{\pm}_0\omega_k^{12}-\Delta^{\pm}_1\omega_k^{02}+\Delta^{\pm}_2\omega_k^{01})e_{012}^k\\+
(\Delta^{\pm}_0\omega_k^{13}-\Delta^{\pm}_1\omega_k^{03}+\Delta^{\pm}_3\omega_k^{01})e_{013}^k\\
+(\Delta^{\pm}_0\omega_k^{23}-\Delta^{\pm}_2\omega_k^{03}+\Delta^{\pm}_3\omega_k^{02})e_{023}^k\\+(\Delta^{\pm}_1\omega_k^{23}-\Delta^{\pm}_2\omega_k^{13}+\Delta^{\pm}_3\omega_k^{12})e_{123}^k].
\end{align*}
From this using \eqref{2.12} and \eqref{2.28} we have
\begin{equation*}
d^c\overset{2}{\omega}=\Big(\sum_{\mu=0}^3e_\mu\Delta^+_\mu\overset{2}{\omega}\Big)_{3}
\end{equation*}
and
\begin{equation*}
\delta^c\overset{2}{\omega}=\Big(\sum_{\mu=0}^3e_\mu\Delta^-_\mu\overset{2}{\omega}\Big)_{1}.
\end{equation*}
For $\overset{4}{\omega}$ we obtain
\begin{align*}
\sum_{\mu=0}^3e_\mu\Delta^-_\mu\overset{4}{\omega}=\sum_k(\Delta^-_0\overset{4}{\omega}_ke_{123}^k+\Delta^-_1\overset{4}{\omega}_ke_{023}^k-\Delta^-_2\overset{4}{\omega}_ke_{013}^k+
\Delta^-_3\overset{4}{\omega}_ke_{012}^k).
\end{align*}
By \eqref{2.30}, we have
\begin{equation*}
\delta^c\overset{4}{\omega}=\sum_{\mu=0}^3e_\mu\Delta^-_\mu\overset{4}{\omega}=\Big(\sum_{\mu=0}^3e_\mu\Delta^-_\mu\overset{4}{\omega}\Big)_{3}.
\end{equation*}
Similar calculations apply to the case of odd forms. Hence
\begin{equation}\label{3.7}
d^c\overset{r}{\omega}=\Big(\sum_{\mu=0}^3e_\mu\Delta^+_\mu\overset{r}{\omega}\Big)_{r+1} \quad  \mbox{for} \quad  r=0,1,2,3
\end{equation}
and
\begin{equation}\label{3.8}
\delta^c\overset{r}{\omega}=\Big(\sum_{\mu=0}^3e_\mu\Delta^-_\mu\overset{r}{\omega}\Big)_{r-1} \quad  \mbox{for} \quad  r=1,2,3,4.
\end{equation}
Combining \eqref{3.7} and \eqref{3.8} yields \eqref{3.6}.
\end{proof}

Clearly,  the discrete Dirac-K\"{a}hler equation \eqref{3.2}  can be rewritten in the form
\begin{equation*}
i\Big(\sum_{r=0}^3\Big(\sum_{\mu=0}^3e_\mu\Delta^+_\mu\overset{r}{\omega}\Big)_{r+1}+
\sum_{r=1}^4\Big(\sum_{\mu=0}^3e_\mu\Delta^-_\mu\overset{r}{\omega}\Big)_{r-1}\Big)=m\sum_{r=0}^4\overset{r}{\omega}.
\end{equation*}
Let $K^{ev}(4)=K^0(4)\oplus K^2(4)\oplus K^4(4)$ and let $\Omega^{ev}\in K^{ev}(4)$ be a complex-valued even inhomogeneous form, i.e. $\Omega^{ev}=\overset{0}{\omega}+\overset{2}{\omega}+\overset{4}{\omega}$.
A discrete analogue of the Joyce equation \eqref{1.8} is defined by
\begin{equation}\label{3.9}
i(d^c+\delta^c)\Omega^{ev}=m\Omega^{ev}e_0,
\end{equation}
where $e_0$ is given by  \eqref{3.4}. From \eqref{3.6} it follows that Eq.~\eqref{3.9} is equivalent to
\begin{equation*}
i\Big(\sum_{\mu=0}^3e_\mu\Delta^+_\mu\overset{0}{\omega}+\Big(\sum_{\mu=0}^3e_\mu\Delta^+_\mu\overset{2}{\omega}\Big)_3+
\Big(\sum_{\mu=0}^3e_\mu\Delta^-_\mu\overset{2}{\omega}\Big)_1+\sum_{\mu=0}^3e_\mu\Delta^-_\mu\overset{4}{\omega}\Big)=m\Omega^{ev}e_0.
\end{equation*}
A more detailed calculation leads to the following system of 8 difference equations
\begin{align*}\label{}
i(\Delta^+_0\overset{0}{\omega}_k+\Delta^-_1\omega_k^{01}+\Delta^-_2\omega_k^{02}+\Delta^-_3\omega_k^{03})=m\overset{0}\omega_k,\\
i(\Delta^+_1\overset{0}{\omega}_k+\Delta^-_0\omega_k^{01}+\Delta^-_2\omega_k^{12}+\Delta^-_3\omega_k^{13})=-m\omega_k^{01},\\
i(\Delta^+_2\overset{0}{\omega}_k+\Delta^-_0\omega_k^{02}-\Delta^-_1\omega_k^{12}+\Delta^-_3\omega_k^{23})=-m\omega_k^{02},\\
i(\Delta^+_3\overset{0}{\omega}_k+\Delta^-_0\omega_k^{03}-\Delta^-_1\omega_k^{13}-\Delta^-_2\omega_k^{23})=-m\omega_k^{03},\\
i(\Delta^+_0\omega_k^{12}-\Delta^+_1\omega_k^{02}+\Delta^+_2\omega_k^{01}+\Delta^-_3\overset{4}{\omega}_k)=m\omega_k^{12},\\
i(\Delta^+_0\omega_k^{13}-\Delta^+_1\omega_k^{03}+\Delta^+_3\omega_k^{01}-\Delta^-_2\overset{4}{\omega}_k)=m\omega_k^{13},\\
i(\Delta^+_0\omega_k^{23}-\Delta^+_2\omega_k^{03}+\Delta^+_3\omega_k^{02}+\Delta^-_1\overset{4}{\omega}_k)=m\omega_k^{23},\\
i(\Delta^+_1\omega_k^{23}-\Delta^+_2\omega_k^{13}+\Delta^+_3\omega_k^{12}+\Delta^-_0\overset{4}{\omega}_k)=-m\overset{4}{\omega}_k
\end{align*}
for each $k=(k_0, k_1, k_2, k_3)$.

\section{Plane Wave Solutions}
In this section, we consider solutions of the discrete  Joyce equation which imitate the plane wave solutions for the  continuum counterpart.
We will mainly follow the strategy of \cite{S4}, but there are extra difficulties here. These difficulties arise in the construction of eigenvectors of the operator  $i(d^c+\delta^c)$ since this operator consists of  difference operators of the two types.

Let us consider the complex-valued 0-forms
\begin{equation}\label{4.1}
 \psi^0=\sum_k\psi^0_kx^k,\qquad \psi^{\mu\nu}=\sum_k\psi^{\mu\nu}_kx^k, \qquad \psi^4=\sum_k\psi^4_kx^k,
 \end{equation}
 where
 \begin{equation*}\label{}
  \psi^0_k=(1+ip_0)^{k_0}(1+ip_1)^{k_1}(1+ip_2)^{k_2}(1+ip_3)^{k_3},
 \end{equation*}
 \begin{equation*}\label{}
  \psi^{01}_k=(1-ip_0)^{-k_0}(1-ip_1)^{-k_1}(1+ip_2)^{k_2}(1+ip_3)^{k_3},
 \end{equation*}
  \begin{equation*}\label{}
  \psi^{02}_k=(1-ip_0)^{-k_0}(1+ip_1)^{k_1}(1-ip_2)^{-k_2}(1+ip_3)^{k_3},
 \end{equation*}
  \begin{equation*}\label{}
  \psi^{03}_k=(1-ip_0)^{-k_0}(1+ip_1)^{k_1}(1+ip_2)^{k_2}(1-ip_3)^{-k_3},
 \end{equation*}
  \begin{equation*}\label{}
  \psi^{12}_k=(1+ip_0)^{k_0}(1-ip_1)^{-k_1}(1-ip_2)^{-k_2}(1+ip_3)^{k_3},
 \end{equation*}
  \begin{equation*}\label{}
  \psi^{13}_k=(1+ip_0)^{k_0}(1-ip_1)^{-k_1}(1+ip_2)^{k_2}(1-ip_3)^{-k_3},
 \end{equation*}
  \begin{equation*}\label{}
  \psi^{23}_k=(1+ip_0)^{k_0}(1+ip_1)^{k_1}(1-ip_2)^{-k_2}(1-ip_3)^{-k_3},
 \end{equation*}
  \begin{equation*}\label{}
  \psi^4_k=(1-ip_0)^{-k_0}(1-ip_1)^{-k_1}(1-ip_2)^{-k_2}(1-ip_3)^{-k_3},
 \end{equation*}
 and $p_\mu\in\mathbb{R}$.
 It is easy to check that
 \begin{equation*}
  \Delta^+_{\mu}\big((1+ip_{\mu})^{k_{\mu}}\big)=(1+ip_{\mu})^{k_{\mu}+1}-(1+ip_{\mu})^{k_{\mu}}=ip_{\mu}(1+ip_{\mu})^{k_{\mu}},
 \end{equation*}
 \begin{equation*}
  \Delta^-_{\mu}\big((1-ip_{\mu})^{-k_{\mu}}\big)=(1-ip_{\mu})^{-k_{\mu}}-(1-ip_{\mu})^{-(k_{\mu}-1)}=ip_{\mu}(1-ip_{\mu})^{-k_{\mu}}.
 \end{equation*}
As a consequence we obtain
 \begin{equation}\label{4.2}
  \Delta^+_{\mu}\psi^0_k=ip_{\mu}\psi^0_k, \qquad \Delta^-_{\mu}\psi^4_k=ip_{\mu}\psi^4_k,
 \end{equation}
 \begin{equation}\label{4.3}
  \Delta^-_{\mu}\psi^{\mu\nu}_k=ip_{\mu}\psi^{\mu\nu}_k, \qquad \Delta^-_{\nu}\psi^{\mu\nu}_k=ip_{\nu}\psi^{\mu\nu}_k,
 \end{equation}
 \begin{equation}\label{4.4}
  \Delta^+_{\iota}\psi^{\mu\nu}_k=ip_{\iota}\psi^{\mu\nu}_k, \quad \mbox{for} \quad \iota\neq\mu, \nu.
 \end{equation}

Let  $A\in K^{ev}(4)$ is a constant complex-valued form.
Hence $A$ can be expanded as
 \begin{equation}\label{4.5}
  A=\alpha^0x+\sum_{\mu<\nu}\alpha^{\mu\nu}e_{\mu\nu}+\alpha^4e,
 \end{equation}
   where $\alpha^0, \alpha^{\mu\nu}, \alpha^4\in\mathbb{C}$  and $x$, $e_{\mu\nu}$, $e$ are the unit forms  given by  \eqref{3.4}.
   Consider the form
   \begin{equation}\label{4.6}
  \Phi=\overset{0}\varphi+\overset{2}\varphi+\overset{4}\varphi,
 \end{equation}
  where
 \begin{equation*}
\overset{0}\varphi=\alpha^0\psi^0, \quad \overset{2}\varphi=\sum_{\mu<\nu}\alpha^{\mu\nu}\psi^{\mu\nu}e_{\mu\nu}, \quad \overset{4}\varphi=\alpha^4\psi^4e,
\end{equation*}
and $\psi^0$, $\psi^{\mu\nu}$ and $\psi^{4}$ are given by \eqref{4.1}.

Now we apply the operators $d^c$ and $\delta^c$ to the forms $\overset{0}\varphi$, $\overset{2}\varphi$ and $\overset{4}\varphi$.
Using \eqref{3.7}, \eqref{3.8} and \eqref{4.2} we obtain
 \begin{equation*}
d^c\overset{0}\varphi=\sum_{\mu=0}^3e_\mu\Delta^+_\mu\overset{0}\varphi=\sum_{\mu=0}^3e_\mu\alpha^0\sum_k(\Delta^+_\mu\psi^0_k)x^k=i\Big(\sum_{\mu=0}^3e_\mu p_\mu\Big)\overset{0}\varphi
\end{equation*}
and
\begin{equation*}
\delta^c\overset{4}\varphi=\sum_{\mu=0}^3e_\mu\Delta^-_\mu\overset{4}\varphi=\sum_{\mu=0}^3e_\mu\alpha^4\sum_k(\Delta^-_\mu\psi^4_k)x^ke=
i\Big(\sum_{\mu=0}^3e_\mu p_\mu\Big)\overset{4}\varphi.
\end{equation*}
By \eqref{4.3} and \eqref{4.4}, as in the proof Proposition~3.2 we calculate
\begin{align*}
d^c\overset{2}{\varphi}=\Big(\sum_{\mu=0}^3e_\mu\Delta^+_\mu\overset{2}{\varphi}\Big)_{3}=
ip_0e_0(\alpha^{12}\psi^{12}e_{12}+\alpha^{13}\psi^{13}e_{13}+\alpha^{23}\psi^{23}e_{23})\\+
ip_1e_1(\alpha^{02}\psi^{02}e_{02}+\alpha^{03}\psi^{03}e_{03}+\alpha^{23}\psi^{23}e_{23})\\+
ip_2e_2(\alpha^{01}\psi^{01}e_{01}+\alpha^{03}\psi^{03}e_{03}+\alpha^{13}\psi^{13}e_{13})\\+
ip_3e_3(\alpha^{01}\psi^{01}e_{01}+\alpha^{02}\psi^{02}e_{02}+\alpha^{12}\psi^{12}e_{12})
\end{align*}
and
\begin{align*}
\delta^c\overset{2}{\varphi}=\Big(\sum_{\mu=0}^3e_\mu\Delta^-_\mu\overset{2}{\varphi}\Big)_{1}=
ip_0e_0(\alpha^{01}\psi^{01}e_{01}+\alpha^{02}\psi^{02}e_{02}+\alpha^{03}\psi^{03}e_{03})\\+
ip_1e_1(\alpha^{01}\psi^{01}e_{01}+\alpha^{12}\psi^{12}e_{12}+\alpha^{13}\psi^{13}e_{13})\\+
ip_2e_2(\alpha^{12}\psi^{12}e_{12}+\alpha^{02}\psi^{02}e_{02}+\alpha^{23}\psi^{23}e_{23})\\+
ip_3e_3(\alpha^{03}\psi^{03}e_{03}+\alpha^{13}\psi^{13}e_{13}+\alpha^{23}\psi^{23}e_{23}).
\end{align*}
This yields
\begin{equation*}
(d^c+\delta^c)\overset{2}\varphi=i\Big(\sum_{\mu=0}^3e_\mu p_\mu\Big)\overset{2}\varphi.
\end{equation*}
Therefore, we have the following result.
\begin{prop}
  For the form \eqref{4.6} we have that
  \begin{equation}\label{4.7}
(d^c+\delta^c)\Phi=i\Big(\sum_{\mu=0}^3e_\mu p_\mu\Big)\Phi.
\end{equation}
  \end{prop}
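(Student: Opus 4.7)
The plan is to assemble the statement from the three eigenvalue identities that have already been established in the paragraphs just above the proposition, using linearity of $d^c+\delta^c$ and Proposition~3.2.

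First I would split the claim along the homogeneous decomposition. Since $\overset{0}\varphi\in K^0(4)$ we have $\delta^c\overset{0}\varphi=0$ (as noted right after \eqref{2.26}), and since $\overset{4}\varphi\in K^4(4)$ we have $d^c\overset{4}\varphi=0$ (by the top-dimension observation following \eqref{2.6}). Hence
\begin{equation*}
(d^c+\delta^c)\Phi \;=\; d^c\overset{0}\varphi \;+\; (d^c+\delta^c)\overset{2}\varphi \;+\; \delta^c\overset{4}\varphi,
\end{equation*}
so it suffices to prove the eigenvalue identity
$(d^c+\delta^c)\overset{r}\varphi=i\bigl(\sum_\mu e_\mu p_\mu\bigr)\overset{r}\varphi$
separately for $r=0,2,4$.

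The cases $r=0$ and $r=4$ are immediate from \eqref{3.7}, \eqref{3.8} and the eigenvalue relations \eqref{4.2}: the component $\psi^0_k$ is a forward eigenfunction with eigenvalue $ip_\mu$ under $\Delta^+_\mu$, while $\psi^4_k$ is a backward eigenfunction with eigenvalue $ip_\mu$ under $\Delta^-_\mu$, giving $d^c\overset{0}\varphi=i(\sum_\mu e_\mu p_\mu)\overset{0}\varphi$ and $\delta^c\overset{4}\varphi=i(\sum_\mu e_\mu p_\mu)\overset{4}\varphi$ respectively. These are exactly the two short computations displayed before the proposition.

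The real work is the $r=2$ case, and it is the step I expect to be the main obstacle because the component $\psi^{\mu\nu}_k$ is a forward eigenfunction only in the \emph{transverse} directions $\iota\notin\{\mu,\nu\}$ (see \eqref{4.4}) and a backward eigenfunction in the \emph{longitudinal} directions $\iota\in\{\mu,\nu\}$ (see \eqref{4.3}). The point is that, applying \eqref{3.7} to $\overset{2}\varphi$ picks up exactly the transverse directions and yields an expression of the form $i\sum_\mu e_\mu p_\mu\!\cdot(\text{transverse terms of }\overset{2}\varphi)$, while \eqref{3.8} picks up the longitudinal directions and yields the complementary $i\sum_\mu e_\mu p_\mu\!\cdot(\text{longitudinal terms of }\overset{2}\varphi)$. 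Adding these two Clifford products simply reconstitutes $i(\sum_\mu e_\mu p_\mu)\overset{2}\varphi$, since together the transverse and longitudinal parts exhaust all pairs $(\iota,\{\mu,\nu\})$.

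Concretely, I would carry out the $r=2$ step by writing $\overset{2}\varphi=\sum_{\mu<\nu}\alpha^{\mu\nu}\psi^{\mu\nu}e_{\mu\nu}$, substituting \eqref{4.3}--\eqref{4.4} into the two displayed identities for $d^c\overset{2}\varphi$ and $\delta^c\overset{2}\varphi$ that already appear in the excerpt, and observing that each is exactly a projection of $i\bigl(\sum_\mu e_\mu p_\mu\bigr)\overset{2}\varphi$ onto $K^3(4)$ and $K^1(4)$ respectively. Summing the three pieces then gives \eqref{4.7}, completing the proof. No sign-juggling with the Clifford relations \eqref{3.5} is required beyond what is already implicit in Proposition~3.2.
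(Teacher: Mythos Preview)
Your proposal is correct and follows essentially the same route as the paper: the paper computes $d^c\overset{0}\varphi$, $\delta^c\overset{4}\varphi$, and then $d^c\overset{2}\varphi$ and $\delta^c\overset{2}\varphi$ separately (these are exactly the displays preceding the proposition), adds them, and reads off \eqref{4.7}. Your transverse/longitudinal framing for the $r=2$ case is a clean conceptual gloss on the paper's direct calculation, but the underlying argument is the same.
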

Substituting \eqref{4.7} into Eq.~\eqref{3.9}  we obtain
 \begin{equation}\label{4.8}
   -\Big(\sum_{\mu=0}^3e_\mu p_\mu\Big) \Phi=m\Phi e_0.
 \end{equation}
 It is not difficult to show that Eq.~\eqref{4.8} is equivalent to the following system of equations
 \begin{align}
(p_0+m)\alpha^{0}\psi^{0}+p_1\alpha^{01}\psi^{01}+p_2\alpha^{02}\psi^{02}+p_3\alpha^{03}\psi^{03}=0,\label{4.9}\\
(p_0+m)\alpha^{12}\psi^{12}-p_1\alpha^{02}\psi^{02}+p_2\alpha^{01}\psi^{01}-p_3\alpha^{4}\psi^{4}=0,\label{4.10}\\
(p_0+m)\alpha^{13}\psi^{13}-p_1\alpha^{03}\psi^{03}-p_2\alpha^{4}\psi^{4}+p_3\alpha^{01}\psi^{01}=0,\label{4.11}\\
(p_0+m)\alpha^{23}\psi^{02}+p_1\alpha^{4}\psi^{4}-p_2\alpha^{03}\psi^{03}+p_3\alpha^{02}\psi^{02}=0,\label{4.12}\\
(p_0-m)\alpha^{01}\psi^{01}+p_1\alpha^{0}\psi^{0}+p_2\alpha^{12}\psi^{12}+p_3\alpha^{13}\psi^{13}=0,\label{4.13}\\
(p_0-m)\alpha^{02}\psi^{02}-p_1\alpha^{12}\psi^{12}+p_2\alpha^{0}\psi^{0}+p_3\alpha^{23}\psi^{23}=0,\label{4.14}\\
(p_0-m)\alpha^{03}\psi^{03}-p_1\alpha^{13}\psi^{13}-p_2\alpha^{23}\psi^{23}+p_3\alpha^{0}\psi^{0}=0,\label{4.15}\\
(p_0-m)\alpha^{4}\psi^{4}+p_1\alpha^{23}\psi^{23}-p_2\alpha^{13}\psi^{13}+p_3\alpha^{12}\psi^{12}=0.\label{4.16}
\end{align}
From \eqref{3.5} it follows that $e_0e_0=x$. Hence Eq.~\eqref{4.8} can be written as
 \begin{equation}\label{4.17}
   -\Big(p_0x+\sum_{\mu=1}^3p_\mu e_0 e_\mu\Big)\Phi=me_0\Phi e_0.
 \end{equation}
 A direct computation shows that
 \begin{equation*}
   \Big(p_0x-\sum_{\mu=1}^3p_\mu e_0 e_\mu\Big)\Big(p_0x+\sum_{\mu=1}^3p_\mu e_0 e_\mu\Big)=\Big(p_0^2-\sum_{\mu=1}^3p_\mu^2\Big)x.
 \end{equation*}
Then
 multiplying both sides of \eqref{4.17} by the same factor: $-\big(p_0x-\sum_{\mu=1}^3p_\mu e_0 e_\mu\big)$ gives
 \begin{equation*}
  \Big(p_0^2-\sum_{\mu=1}^3p_\mu^2\Big)x\Phi=-m\Big(p_0x-\sum_{\mu=1}^3p_\mu e_0 e_\mu\Big)e_0\Phi e_0.
 \end{equation*}
 By \eqref{3.5} this gives the equation
  \begin{equation*}
  \Big(p_0^2-\sum_{\mu=1}^3p_\mu^2\Big)\Phi=-m\Big(\sum_{\mu=0}^3p_\mu e_\mu\Big)\Phi e_0.
 \end{equation*}
 Applying \eqref{4.8} to the right-hand side   we obtain
 \begin{equation*}
  \Big(p_0^2-\sum_{\mu=1}^3p_\mu^2\Big)\Phi=m^2\Phi e_0e_0,
 \end{equation*}
 or equivalently,
 \begin{equation*}
  \Big(p_0^2-\sum_{\mu=1}^3p_\mu^2-m^2\Big)\Phi=0.
 \end{equation*}
  Thus exactly as in \cite{S4}, we have the following assertion.
  \begin{prop}
  The form \eqref{4.6} is a non-trivial solution of Eq.~\eqref{3.9} if and only if
  \begin{equation}\label{4.18}
  p_0^2=m^2+p_1^2+p_2^2+p_3^2.
 \end{equation}
   \end{prop}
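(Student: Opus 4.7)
The idea is to reduce the Joyce equation \eqref{3.9} to a purely algebraic Clifford identity via Proposition~4.1 and then to extract the mass-shell scalar by an algebraic squaring trick. Substituting \eqref{4.7} into \eqref{3.9} produces at once the identity \eqref{4.8}, namely $-(\sum_\mu p_\mu e_\mu)\Phi = m\Phi e_0$. Sorting \eqref{4.8} into homogeneous components with the Clifford rules (a)-(c) together with \eqref{3.5} makes it equivalent to the $8$-equation system \eqref{4.9}-\eqref{4.16}, in which the nowhere-vanishing factors $\psi^0_k,\psi^{\mu\nu}_k,\psi^4_k$ play the role of scalar multipliers of the constants $\alpha^0,\alpha^{\mu\nu},\alpha^4$.

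For the necessity direction I would rewrite \eqref{4.8} as \eqref{4.17} via $e_0^2=x$ and then left-multiply by the conjugate pencil $-(p_0 x - \sum_{\mu=1}^3 p_\mu e_0 e_\mu)$. The anticommutation $e_0 e_\mu = -e_\mu e_0$ for $\mu\neq 0$, itself a consequence of \eqref{3.5}, forces the two pencils to collapse to the scalar $p_0^2-\sum_{\mu=1}^3 p_\mu^2$ on the left. On the right the same manipulation combined with a second application of \eqref{4.8} converts the resulting $e_0\Phi e_0$ into $m^2\Phi$. The outcome is the scalar identity $(p_0^2 - \sum_{\mu=1}^3 p_\mu^2 - m^2)\Phi = 0$, from which non-triviality of $\Phi$ forces the mass-shell relation \eqref{4.18}.

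For the converse direction I would note that the same squaring shows that the Dirac-type map $\Phi\mapsto -(\sum_\mu p_\mu e_\mu)\Phi - m\Phi e_0$ on $K^{ev}(4)$ has square equal to $-(p_0^2-\sum_{\mu=1}^3 p_\mu^2 - m^2)$ times the identity; under \eqref{4.18} this square vanishes, so its kernel is non-trivial, yielding non-trivial coefficients $(\alpha^0,\alpha^{\mu\nu},\alpha^4)$ that satisfy \eqref{4.9}-\eqref{4.16}, and hence a non-trivial $\Phi$ of the form \eqref{4.6}. The main obstacle I anticipate is the Clifford bookkeeping in the squaring step, specifically verifying that the cross terms $e_0 e_\mu e_\nu e_0$ collapse correctly and that the second invocation of \eqref{4.8} is legitimate after an extra $e_0$ has appeared on the right; associativity of the Clifford product on $K^{ev}(4)$ together with $e_0^2=x$ should handle both issues cleanly.
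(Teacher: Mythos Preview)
Your necessity argument coincides with the paper's: rewrite \eqref{4.8} as \eqref{4.17} via $e_0^2=x$, left-multiply by the conjugate pencil $-(p_0 x-\sum_{\mu=1}^3 p_\mu e_0 e_\mu)$, collapse the left-hand side to the scalar $p_0^2-\sum_{\mu=1}^3 p_\mu^2$ using \eqref{3.5}, and invoke \eqref{4.8} a second time on the right to reach $(p_0^2-\sum_{\mu=1}^3 p_\mu^2-m^2)\Phi=0$. This is exactly what the paper does before stating the proposition.

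Your converse argument, however, has a genuine error. The map $D:\Phi\mapsto -P\Phi-m\Phi e_0$ (with $P=\sum_\mu p_\mu e_\mu$) does \emph{not} square to a scalar: a direct computation gives
\[
D^2(\Phi)=P^2\Phi+2mP\Phi e_0+m^2\Phi e_0^2=(p_0^2-\textstyle\sum_{\mu=1}^3 p_\mu^2+m^2)\Phi+2mP\Phi e_0,
\]
and the cross term $2mP\Phi e_0$ does not cancel. The ``same squaring'' from the necessity step is not $D\circ D$ but rather the composition with the \emph{conjugate} operator $\tilde D:\Phi\mapsto -P\Phi+m\Phi e_0$; one checks that $\tilde D D=D\tilde D=(p_0^2-\sum_{\mu=1}^3 p_\mu^2-m^2)\,\mathrm{id}$, and under \eqref{4.18} this vanishes, so $\mathrm{im}\,\tilde D\subset\ker D$ is non-zero. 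With this correction your kernel argument goes through. The paper itself does not spell out the sufficiency direction in the text, deferring instead to \cite{S4} and then exhibiting the explicit families \eqref{4.24}--\eqref{4.25} after Theorem~4.4.
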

 Therefore, if we define  $p=\{ p_0, p_1, p_2, p_3\}$ to be the energy-momentum vector of a particle with (proper) mass $m$,  then the relation \eqref{4.18} is the energy-momentum relation.
 In view of this, it makes sense to state that  the form \eqref{4.6} is a discrete version of the plane wave solution.
 Recall that  the Joyce equation \eqref{1.6} admits the plane wave solutions of the form \eqref{1.7}. Thus the 0-form $\psi=\psi^{0}+\psi^{\mu\nu}+\psi^{4}$ given by \eqref{4.1} plays a role of the function $e^{ip\cdot x}$ in the discrete case.

It should be noted that the same proposition for a discrete analog of the Hestenes equation in the case of a discrete model based on the double complex construction is proven in \cite{S3}.

   Let us represent the even complex-valued form  \eqref{4.5} as
\begin{equation*}
  A=A_{+}+A_{-},
 \end{equation*}
 where
 \begin{equation*}
   A_{+}=\alpha^{0}x+\alpha^{12}e_{12}+\alpha^{13}e_{13}+\alpha^{23}e_{23},
  \end{equation*}
 \begin{equation*}
   A_{-}=\alpha^{01}e_{01}+\alpha^{02}e_{02}+\alpha^{03}e_{03}+\alpha^4e.
  \end{equation*}
  It is easy to check that $A_{+}$ commutes with $e_0$  and  $A_{-}$  anticommutes with it, i.e.
  \begin{equation}\label{4.19}
  e_0A_{\pm}=\pm A_{\pm}e_0.
 \end{equation}
Then the form  \eqref{4.6} can be represent also as
  \begin{equation*}
  \Phi=\Phi_{+}+\Phi_{-},
  \end{equation*}
  where
 \begin{equation}\label{4.20}
   \Phi_{+}=\alpha^{0}\psi^{0}x+\alpha^{12}\psi^{12}e_{12}+\alpha^{13}\psi^{13}e_{13}+\alpha^{23}\psi^{23}e_{23},
  \end{equation}
 \begin{equation}\label{4.21}
  \Phi_{-}=\alpha^{01}\psi^{01}e_{01}+\alpha^{02}\psi^{02}e_{02}+\alpha^{03}\psi^{03}e_{03}+\alpha^4\psi^{4}e.
  \end{equation}
  Since the 0-forms \eqref{4.1} commute with $e_0$  then  Eq.~\eqref{4.19} is true also for $\Phi_{\pm}$, i.e.
 \begin{equation*}
  e_0\Phi_{\pm}=\pm \Phi_{\pm}e_0.
 \end{equation*}
 As is shown in \cite{S4}, the proof of the following lemma follows from a direct computation.
 \begin{lem}
  The form $e_{0\mu}\Phi_{-}$ commutes with $e_0$ and has the view    \eqref{4.20},  while
  $e_{0\mu}\Phi_{+}$ anticommutes with $e_0$ and has the view    \eqref{4.21}  for any $\mu=1,2,3$.
  \end{lem}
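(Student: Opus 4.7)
The plan is to avoid any direct enumeration of Clifford products by leveraging the $\mathbb{Z}/2$-grading of $K^{ev}(4)$ induced by commutation with $e_0$. First I would observe, as an immediate consequence of \eqref{3.5}, that $e_0 e_\mu = -e_\mu e_0$ for $\mu=1,2,3$; since $e_{0\mu}=e_0e_\mu$ by rule (c) of the Clifford multiplication, this gives $e_0 e_{0\mu} = e_0^2 e_\mu = e_\mu$ while $e_{0\mu}e_0 = e_0 e_\mu e_0 = -e_\mu$, so $e_{0\mu}$ anticommutes with $e_0$ for each $\mu=1,2,3$. Similarly, the basis forms $x, e_{12}, e_{13}, e_{23}$ each commute with $e_0$ while $e_{01}, e_{02}, e_{03}, e$ each anticommute with it, so the decomposition $\Phi = \Phi_{+} + \Phi_{-}$ from \eqref{4.20}--\eqref{4.21} really satisfies $e_0 \Phi_{\pm} = \pm \Phi_{\pm} e_0$.

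The commutation statements of the lemma then follow from a one-line manipulation:
\begin{equation*}
e_0 (e_{0\mu}\Phi_{\pm}) = (e_0 e_{0\mu})\Phi_{\pm} = -(e_{0\mu} e_0)\Phi_{\pm} = \mp e_{0\mu}\Phi_{\pm} e_0,
\end{equation*}
so $e_{0\mu}\Phi_{-}$ commutes with $e_0$ and $e_{0\mu}\Phi_{+}$ anticommutes with it.

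For the structural (``view'') claims, I would appeal to the fact that Clifford multiplication preserves parity, hence $e_{0\mu}\Phi_{\pm} \in K^{ev}(4)$, together with the unique splitting of $K^{ev}(4)$ as the direct sum of the span (over 0-forms) of $\{x, e_{12}, e_{13}, e_{23}\}$ --- the commutant of $e_0$ --- and the span of $\{e_{01}, e_{02}, e_{03}, e\}$ --- the anticommutant of $e_0$. Since $e_{0\mu}\Phi_{-}$ is even and commutes with $e_0$, it is a combination of the first set and so has the view \eqref{4.20}; since $e_{0\mu}\Phi_{+}$ is even and anticommutes with $e_0$, it has the view \eqref{4.21}.

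The main obstacle, should one prefer a direct computational proof, is the tedious tabulation of the products $e_{0\mu}e_{0\nu}$, $e_{0\mu}e_{\nu\sigma}$ for $1 \le \nu < \sigma \le 3$, $e_{0\mu}x$, and $e_{0\mu}e$ for each $\mu=1,2,3$, together with checking that each result falls in the correct four-dimensional subspace. The grading argument above bypasses this entirely by distilling the bookkeeping into the single algebraic observation that within $K^{ev}(4)$ the commutant and anticommutant of $e_0$ are precisely the two subspaces appearing in \eqref{4.20} and \eqref{4.21}.
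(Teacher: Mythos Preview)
Your argument is correct, and it is genuinely different from the paper's approach. The paper does not prove the lemma at all but defers to \cite{S4}, noting only that ``the proof \ldots\ follows from a direct computation''; that is, the intended method is precisely the term-by-term enumeration of Clifford products $e_{0\mu}e_{0\nu}$, $e_{0\mu}e_{\nu\sigma}$, $e_{0\mu}e$ that you sketch as the ``obstacle'' paragraph and then bypass. Your route instead exploits the $\mathbb{Z}/2$-grading of $K^{ev}(4)$ by the sign in $e_0\Psi = \pm\Psi e_0$: once one checks that each of the eight even unit forms lies in one of the two eigenspaces (which you do), the lemma becomes the single identity $e_0(e_{0\mu}\Phi_{\pm}) = \mp(e_{0\mu}\Phi_{\pm})e_0$ together with the observation that this grading is a direct-sum decomposition. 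The payoff is clarity and brevity; the cost is that you do not obtain the explicit coefficients of $e_{0\mu}\Phi_{\pm}$, which the paper's subsequent derivation of \eqref{4.24}--\eqref{4.25} ultimately needs anyway---so in the context of the full section the direct computation is not wasted work.
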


  By Lemma~4.3, we have the following result whose proof may be found in \cite{S4}.
  \begin{thm}
 The form $\Phi$ given by \eqref{4.6} is a   non-trivial solution of the discrete Joyce equation if and only if the condition
\begin{equation}\label{4.22}
 \Phi_{-}=\frac{p_1e_{01}+p_2e_{02}+p_3e_{03}}{m-p_0}\Phi_{+}
 \end{equation}
 holds,  or equivalently,
 \begin{equation}\label{4.23}
 \Phi_{+}=-\frac{p_1e_{01}+p_2e_{02}+p_3e_{03}}{m+p_0}\Phi_{-}.
 \end{equation}
\end{thm}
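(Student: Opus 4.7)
The plan is to start from \eqref{4.8}, which by Proposition~4.1 is equivalent to $\Phi$ of the form \eqref{4.6} solving the Joyce equation \eqref{3.9}. The main tools are the Clifford identity $e_0^2=x$, the splitting $\Phi=\Phi_{+}+\Phi_{-}$ together with $e_0\Phi_{\pm}=\pm\Phi_{\pm}e_0$, and Lemma~4.3.

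First, I would factor out $e_0$ from the spatial part of \eqref{4.8}. Since $e_\mu=e_0 e_{0\mu}$ for $\mu=1,2,3$, the left-hand side of \eqref{4.8} equals $-e_0(p_0 x+P)\Phi$, where
\begin{equation*}
P:=p_1 e_{01}+p_2 e_{02}+p_3 e_{03}.
\end{equation*}
Multiplying both sides of \eqref{4.8} on the left by $e_0$, using $e_0^2=x$ and $e_0\Phi_{\pm}e_0=\pm\Phi_{\pm}$, reduces the equation to
\begin{equation*}
-(p_0 x+P)(\Phi_{+}+\Phi_{-})=m(\Phi_{+}-\Phi_{-}).
\end{equation*}

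Next I would invoke Lemma~4.3: for each $\mu=1,2,3$, the product $e_{0\mu}\Phi_{+}$ has the shape \eqref{4.21} while $e_{0\mu}\Phi_{-}$ has the shape \eqref{4.20}. Since the shapes \eqref{4.20} and \eqref{4.21} span complementary subspaces of $K^{ev}(4)$, equating the components of both sides in each subspace yields the pair of relations
\begin{equation*}
-(m+p_0)\Phi_{+}=P\Phi_{-},\qquad (m-p_0)\Phi_{-}=P\Phi_{+},
\end{equation*}
which are precisely \eqref{4.23} and \eqref{4.22}.

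Finally, to justify the ``or equivalently'' clause I would substitute \eqref{4.22} into \eqref{4.23}. A direct computation, using $e_{0\mu}^2=x$ and the pairwise anticommutation of distinct $e_{0\mu}$, gives $P^2=(p_1^2+p_2^2+p_3^2)x$; the resulting self-consistency requirement collapses to the energy-momentum relation \eqref{4.18}, which holds automatically for non-trivial solutions by Proposition~4.2. The main subtlety lies in confirming that the decomposition of $K^{ev}(4)$ into the shapes \eqref{4.20} and \eqref{4.21} is genuinely a direct sum and is preserved under left multiplication by the bivectors $e_{0\mu}$, so that term-by-term matching is valid; this is exactly the content of Lemma~4.3.
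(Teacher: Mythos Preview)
Your proposal is correct and follows essentially the same route the paper indicates: the paper does not spell out a proof but explicitly invokes Lemma~4.3 and defers the details to \cite{S4}, and your argument is precisely the natural Clifford-algebraic unpacking of that lemma (factor $e_0$ out of \eqref{4.8}, use $e_0\Phi_{\pm}e_0=\pm\Phi_{\pm}$, and split along the direct-sum decomposition \eqref{4.20}--\eqref{4.21}). The only point worth tightening is the logical role of Proposition~4.2 in the ``or equivalently'' clause: your derivation actually shows that \eqref{4.8} is equivalent to the conjunction of \eqref{4.22} and \eqref{4.23}, and the mutual equivalence of \eqref{4.22} and \eqref{4.23} genuinely requires \eqref{4.18}; so make explicit that \eqref{4.18} is being assumed (as a standing hypothesis furnished by Proposition~4.2) rather than derived, to avoid any appearance of circularity in the converse direction.
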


It is not difficult to show that  the condition \eqref{4.22} is equivalent to the system of Eqs.~\eqref{4.13}--\eqref{4.16}. Similarly,  the condition \eqref{4.23} gives the system of Eqs.~\eqref{4.9}--\eqref{4.12}.
 Moreover,
 using Eqs.~\eqref{4.13}--\eqref{4.16} the form \eqref{4.22} can be written as
 \begin{eqnarray*}
 \Phi_{-}=\frac{p_1\alpha^{0}\psi^{0}+p_2\alpha^{12}\psi^{12}+p_3\alpha^{13}\psi^{13}}{m-p_0}e_{01}
 +\frac{p_2\alpha^{0}\psi^{0}-p_1\alpha^{12}\psi^{12}+p_3\alpha^{23}\psi^{23}}{m-p_0}e_{02}\\
 +\frac{p_3\alpha^{0}\psi^{0}-p_1\alpha^{13}\psi^{13}-p_2\alpha^{23}\psi^{23}}{m-p_0}e_{03}+\frac{p_3\alpha^{12}\psi^{12}-p_2\alpha^{13}\psi^{13}+p_1\alpha^{23}\psi^{23}}{m-p_0}e.
   \end{eqnarray*}
   Hence
   \begin{eqnarray}\label{4.24}
\Phi=a_1\psi^{0}((m-p_0)x+p_1e_{01}+p_2e_{02}+p_3e_{03})\nonumber\\+a_2\psi^{12}((m-p_0)e_{12}+p_2e_{01}-p_1e_{02}+p_3e)\nonumber \\
  +a_3\psi^{13}((m-p_0)e_{13}+p_3e_{01}-p_1e_{03}+p_2e)\nonumber\\
    +a_4\psi^{23}((m-p_0)e_{23}+p_3e_{02}-p_2e_{03}+p_1e),
 \end{eqnarray}
 where
 \begin{equation*}
   a_1=\frac{\alpha^{0}}{m-p_0}, \quad a_2=\frac{\alpha^{12}}{m-p_0}, \quad a_3=\frac{\alpha^{13}}{m-p_0}, \quad a_4=\frac{\alpha^{23}}{m-p_0}. \quad
 \end{equation*}
It turns out that  the discrete general plane wave solution of Eq.~\eqref{3.9} consists of four linearly independent solutions for given $p_\mu$, $\mu=1,2,3$.
To be more precise, there are four linearly independent solutions  for each positive and  negative  $p_0=\pm\sqrt{m^2+p_1^2+p_2^2+p_3^2}$.
Thus the result is the same as for the continuum counterpart. The presence of eight such solutions in the continuum case is interpreted in details  in \cite{J}.

Similarly, again, if we take the condition \eqref{4.23} and use Eqs.~\eqref{4.9}--\eqref{4.12} then we obtain
\begin{eqnarray*}
 \Phi_{+}=\frac{-p_1\alpha^{01}\psi^{01}-p_2\alpha^{02}\psi^{02}-p_3\alpha^{03}\psi^{03}}{m+p_0}x
 +\frac{p_1\alpha^{02}\psi^{02}-p_2\alpha^{01}\psi^{01}-p_3\alpha^{4}\psi^{4}}{m+p_0}e_{12}\\
 +\frac{p_1\alpha^{03}\psi^{03}+p_2\alpha^{4}\psi^{4}-p_3\alpha^{01}\psi^{01}}{m+p_0}e_{13}+\frac{-p_1\alpha^{4}\psi^{4}+p_2\alpha^{03}\psi^{03}-p_3\alpha^{02}\psi^{02}}{m+p_0}e_{23}.
   \end{eqnarray*}
This leads to the general solution
\begin{eqnarray}\label{4.25}
\Phi=b_1\psi^{01}((m+p_0)e_{01}-p_1x-p_2e_{12}-p_3e_{13})\nonumber\\\
+b_2\psi^{02}((m+p_0)e_{02}-p_2x+p_1e_{12}-p_3e_{23})\nonumber\ \\
  +b_3\psi^{03}((m+p_0)e_{03}-p_3x+p_1e_{13}+p_2e_{23})\nonumber\\\
    +b_4\psi^{4}((m+p_0)e-p_3e_{12}+p_2e_{13}-p_1e_{23}),
 \end{eqnarray}
 where
 \begin{equation*}
   b_1=\frac{\alpha^{01}}{m+p_0}, \quad b_2=\frac{\alpha^{02}}{m+p_0}, \quad b_3=\frac{\alpha^{03}}{m+p_0}, \quad b_4=\frac{\alpha^{4}}{m+p_0}. \quad
 \end{equation*}
 It is clear that the equivalence of \eqref{4.22} and \eqref{4.23} implies the equivalence of the solutions \eqref{4.24} and \eqref{4.25}.

\medskip



\begin{thebibliography}{30}
\bibitem{B}  W.~E. Baylis ed., \textit{Clifford (Geometric) Algebra with Applications to
Physics, Mathematics, and Engineering}.  Birkh\"{a}user,  1996.

\bibitem{B1} W.~E. Baylis, \textit{Comment on `Dirac theory in spacetime algebra'.}  J. Phys. A: Math. Gen.   \textbf{35} (2002),  4791--4796.

\bibitem{Beauce}
V.~de Beauc\'{e}, S. Sen and J. C. Sexton,   \textit{Chiral dirac fermions on the lattice using geometric discretisation.}
Nucl. Phys. B (Proc. Suppl.) \textbf{129--130} (2004), 468--470.

\bibitem{Dezin} A.~A. Dezin, \textit{Multidimensional Analysis and Discrete Models.} CRC Press, Boca Raton, 1995.

\bibitem{FKS} N. Faustino, U. K\"{a}hler,  F. Sommen, \textit{Discrete Dirac operators in Clifford Analysis.}
Adv. Appl. Cliff. Alg. \textbf{17}(3) (2007),  451--467.

\bibitem{F2} N. Faustino,  \textit{Solutions for the Klein-Gordon and Dirac Equations on the Lattice Based on Chebyshev Polynomials}.
Complex Anal. Oper. Theory \textbf{10}(2) (2016), 379--399.

\bibitem{F3} N. Faustino,  \textit{A conformal group approach to the Dirac-Kahler system on the lattice.}
   Math. Methods Appl. Sci.  \textbf{40}(11) (2017),  4118--4127.

\bibitem{H1} D. Hestenes,  \textit{Real Spinor Fields.}  Journal of Mathematical Physics
 \textbf{8}(4) (1967), 798--808.

\bibitem{H2} D. Hestenes, \textit{Spacetime Algebra.} Gordon and Breach,  New York, 1966.
\bibitem{J}
W. P. Joyce,  Dirac theory in spacetime algebra: I. The generalized
bivector Dirac equation.
J. Phys. A: Math. Gen. \textbf{34} (2001), 1991--2005.

\bibitem{JM}
W. P. Joyce, J. G. Martin, Equivalence of Dirac formulations
J. Phys. A: Math. Gen. \textbf{35} (2002),  4729--4736.






\bibitem{Kahler}
 E. K\"{a}hler,   \textit{Der innere differentialk\"{u}l.}  Rendiconti di Matematica \textbf{21}(3--4) (1962), 425-523.

\bibitem{Kanamori}
I. Kanamori,  N. Kawamoto,  \textit{Dirac-K\"{a}hler fermion from Clifford product with noncommutative differential form on a lattice.}
  Int. J. Mod. Phys. A \textbf{19}(5) (2004),  695--736.

\bibitem{Rabin}
 J. M. Rabin, \textit{Homology theory of lattice fermion doubling.}  Nucl. Phys. B \textbf{201}(2) (1982),  315--332.

\bibitem{S1} V. Sushch,  \textit{A discrete model of the Dirac-K\"{a}hler equation.} Rep. Math. Phys. \textbf{73}(1) (2014),  109--125.
\bibitem{S2} V. Sushch, \textit{On the chirality of a discrete Dirac-K\"{a}hler equation.}  Rep. Math. Phys. \textbf{76}(2) (2015),  179--196.
\bibitem{S3} V. Sushch,  \textit{A discrete Dirac-K\"{a}hler equation using a geometric discretisation scheme.} Adv. Appl. Clifford Algebras.  \textbf{28:72} (2018), 1--17.
\bibitem{S4} V. Sushch,  \textit{Discrete versions of some Dirac type equations and plane wave solutions.} In: Pinelas, S. et al. (eds.) Differential and difference
equations with applications. Springer Proceedings in Mathematics \& Statistics \textbf{230}, 463--475. Cham: Springer, 2018.

\bibitem{Vaz} J. Vaz,  \textit{Clifford-like Calculus over Lattice.} Adv. Appl. Clifford Alg. \textbf{7}(1)  (1997), 37--70.
\end{thebibliography}
\end{document}